\DeclareMathAlphabet{\mathdutchcal}{U}{dutchcal}{m}{n}
\SetMathAlphabet{\mathdutchcal}{bold}{U}{dutchcal}{b}{n}
\DeclareMathAlphabet{\mathdutchbcal}{U}{dutchcal}{b}{n}
\DeclareMathOperator{\Tr}{Tr}
\DeclareMathOperator{\Pf}{Pf}
\DeclareMathAlphabet{\mathpzc}{OT1}{pzc}{m}{it}
\newcolumntype{R}[1]{>{\raggedleft\arraybackslash }b{#1}}
\newcolumntype{L}[1]{>{\raggedright\arraybackslash }b{#1}}
\newcolumntype{C}[1]{>{\centering\arraybackslash }b{#1}}
\newtheorem{prop}{Proposition}[section]
\theoremstyle{definition}
\newtheorem{definition}{Definition}[section]
\theoremstyle{remark}
\newtheorem*{remark}{Remark}
\newcommand\numberthis{\addtocounter{equation}{1}\tag{\theequation}}
\newcommand{\p}{\mathfrak{p}}
\newcommand{\h}{\mathfrak{h}}
\newcommand{\g}{\mathfrak{g}}
\newcommand{\m}{\mathfrak{m}}
\newcommand{\cm}{\mathcal{M}}
\newcommand{\GamTtM}{\Gamma(T\widetilde{\mathcal{M}})}
\newcommand{\tg}{\tilde{g}}
\newcommand{\tk}{\tilde{k}}
\newcommand{\tl}{\tilde{\ell}}
\newcommand{\varpih}{\varpi_\mathfrak{h}}
\newcommand{\varpiz}{\varpi_{0}}
\newcommand{\varpip}{\varpi_\p}
\newcommand{\varpim}{\varpi_\m}
\newcommand{\varpihmod}[2][ ]{\varpi^{ #1}_{\h \:\,#2}} 
\newcommand{\Amod}[2][ ]{A^{#1}_{~\,\,#2}}
\newcommand{\betamod}[2][ ]{\beta^{\,#1}_{~\,\,#2}} 
\newcommand{\varpipmod}[2][ ]{\varpi^{ ~\,#1}_{\p\,\,#2}}
\newcommand{\betainvmod}[2][ ]{\beta^{-1 \, #1}_{~~\,\,\; \; \;#2}}
\newcommand{\Rmod}[2][ ]{R^{#2}_{\;\; \; #1}} 
\newcommand{\Rzmod}[2][ ]{R^{#2}_{0\; \; #1}} 
\newcommand{\rR}{\mathrm{R}}
\newcommand{\Tmod}[2][ ]{T^{#2}_{\;\;  #1}}
\newcommand{\bOm}{\bar{\Omega}}
\newcommand{\bOmh}{\bar{\Omega}_{\h}}
\newcommand{\bOmm}{\bar{\Omega}_{\m}}
\newcommand{\bOmp}{\bar{\Omega}_{\p}}
\newcommand{\tS}{\tilde{S}}
\def\lefteqno{\tagsleft@true}\def\righteqno{\tagsleft@false}
\newenvironment{myalign*}{\ifvmode\else\hfil\null\linebreak\fi
	\hspace*{-\leftmargin}\minipage\textwidth
	\setlength{\abovedisplayskip}{0pt}%
	\setlength{\abovedisplayshortskip}{\abovedisplayskip}%
	\start@align\@ne\st@rredtrue\m@ne}%
{\endalign\endminipage\linebreak}
\providecommand{\keywords}[1]
{
	\small	
	\textbf{Keywords :} #1
}
\let\save@mathaccent\mathaccent
\newcommand*\if@single[3]{%
	\setbox0\hbox{${\mathaccent"0362{#1}}^H$}%
	\setbox2\hbox{${\mathaccent"0362{\kern0pt#1}}^H$}%
	\ifdim\ht0=\ht2 #3\else #2\fi
}
\newcommand*\rel@kern[1]{\kern#1\dimexpr\macc@kerna}
\newcommand*\widebar[1]{\@ifnextchar^{{\wide@bar{#1}{0}}}{\wide@bar{#1}{1}}}
\newcommand*\wide@bar[2]{\if@single{#1}{\wide@bar@{#1}{#2}{1}}{\wide@bar@{#1}{#2}{2}}}
\newcommand*\wide@bar@[3]{%
	\begingroup
	\def\mathaccent##1##2{%
		\let\mathaccent\save@mathaccent
		\if#32 \let\macc@nucleus\first@char \fi
		\setbox\z@\hbox{$\macc@style{\macc@nucleus}_{}$}%
		\setbox\tw@\hbox{$\macc@style{\macc@nucleus}{}_{}$}%
		\dimen@\wd\tw@
		\advance\dimen@-\wd\z@
		\divide\dimen@ 3
		\@tempdima\wd\tw@
		\advance\@tempdima-\scriptspace
		\divide\@tempdima 10
		\advance\dimen@-\@tempdima
		\ifdim\dimen@>\z@ \dimen@0pt\fi
		\rel@kern{0.6}\kern-\dimen@
		\if#31
		\overline{\rel@kern{-0.6}\kern\dimen@\macc@nucleus\rel@kern{0.4}\kern\dimen@}%
		\advance\dimen@0.4\dimexpr\macc@kerna
		\let\final@kern#2%
		\ifdim\dimen@<\z@ \let\final@kern1\fi
		\if\final@kern1 \kern-\dimen@\fi
		\else
		\overline{\rel@kern{-0.6}\kern\dimen@#1}%
		\fi
	}%
	\macc@depth\@ne
	\let\math@bgroup\@empty \let\math@egroup\macc@set@skewchar
	\mathsurround\z@ \frozen@everymath{\mathgroup\macc@group\relax}%
	\macc@set@skewchar\relax
	\let\mathaccentV\macc@nested@a
	\if#31
	\macc@nested@a\relax111{#1}%
	\else
	\def\gobble@till@marker##1\endmarker{}%
	\futurelet\first@char\gobble@till@marker#1\endmarker
	\ifcat\noexpand\first@char A\else
	\def\first@char{}%
	\fi
	\macc@nested@a\relax111{\first@char}%
	\fi
	\endgroup
}
\setlist[enumerate]{label=\thesection.\arabic{*},resume}
\preto\section{%
	\restartlist{enumerate}%
}
\let\oldsection\section
\renewcommand{\section}{
	\renewcommand{\theequation}{\thesection.\arabic{equation}}
	\oldsection}
\let\oldsubsection\subsection
\renewcommand{\subsection}{
	\renewcommand{\theequation}{\thesubsection.\arabic{equation}}
	\oldsubsection}
\date{}
\author {Jean Thibaut\,\orcidlink{0009-0004-7396-1395}\footnote{jthibaut@cpt.univ-mrs.fr} \\
	\\
	{\normalsize Centre de Physique Théorique}\\
	{\normalsize Aix Marseille Univ, Université de Toulon, CNRS, CPT, Marseille, France.}
}
\title{Dynamical dark energy and gravitational coupling from moving geometries}
\begin{document}
	
	\maketitle
	
		\begin{abstract}
		    We introduce the notion of moving Cartan geometries described by quotients of Lie groups and Lie algebras with spacetime dependant structure constants and construct associated deformed topological gauge action functionals for Lorentzian (including dS and AdS) and Lorentz$\times$Weyl moving geometries. The actions feature a generalization of the Nieh-Yan topological term for a varying coupling constant. 
		    We compute the equations of motion of the  gauge $+$ matter actions and show that they dictate at each spacetime point the geometry, leading to both a dynamical source of dark energy and a dynamical gravitational coupling, described by combinations of scalars built from spacetime curvature, torsion and the matter content of the theory. The action becomes asymptotically topological when the gauge action contribution to dark energy vanishes.
		\end{abstract}

	\keywords{Topological gravity, Homogeneous/symmetric spaces, Cartan geometry, Characteristic classes, Dynamical dark energy, Varying gravitational coupling}
		
	\newpage
	

		\section{Introduction}
		
		In the literature, promoting coupling constants to scalar fields has already been explored via theories like Chern-Simons gravity \cite{alexander_chern-simons_2009} or by promoting the Barbero-Immirzi parameter to a scalar field \cite{taveras_barbero-immirzi_2008}.
		Other approaches \cite{hamber2011scale} also describe a running Newton constant $G$ motivated by the renormalization group.
		The same goes for descriptions of a dynamical cosmological term (replacing the cosmological constant) $\Lambda$ in terms of a scalar field (e.g. \cite{alexander_quantum_2019,alexander_zero-parameter_2019}) acting as a source of dark energy (see \cite{copeland_dynamics_2006,li2013dark} for reviews). A mix of both gravitational coupling and cosmological term varying in time has also been described in \cite{sengupta2025cosmological}, see \cite{uzan2011varying} for a review of similar attempts. Recent developments regarding the tensions in cosmology \cite{tang_uniting_2025,collaboration_desi_2025} seem to push in the direction of a dynamical form of dark energy, with a transfer from dark energy to dark matter at late cosmological times. Thus, renewing interest in the previous type of approaches consisting of promoting constants to scalar fields.
		The goal of the present paper is to apply this kind of procedure to Cartan geometry, directly at the level of Lie groups, Lie algebras and their structure constants. The new scalar fields are then entirely determined at each point by enforcing the action principle without needing any additional input.
		
		Let $\mathcal{M}$ be a manifold described by a Cartan geometry $G/H$, where $SO(3,1) \subset H\subset G$ are Lie groups and $\mathfrak{so}(3,1) \subset \h \subset \g$ are their respective Lie algebras. We are interested in the case where $\g = \h \oplus \m$ is a symmetric Lie algebra since the structure constant corresponding to the Lie bracket  $[\m,\m] \subset \h$ is directly linked to a bare cosmological constant $\Lambda_0$.
		In this article we will consider the geometries described in \cite{Thibaut:2024uia} and generalize them by promoting to scalar fields the structure constants which gave rise to the bare cosmological constant $\Lambda_0$ in the aforementioned article.
		The gauge action we consider is the one from \cite{Thibaut:2024uia}, namely (the second equality holds for $\dim(\mathcal{M})=4$):
		\begin{align} \label{eq:action_start}
            S_G [\varpi]  = &\int_\mathcal{M} \big(r P(\bOm) + e \Pf  (\dfrac{F}{2\pi}) + y \det ( \mathds{1} + \dfrac{\bOmh}{2\pi})\big)  \notag\\
            & \\
            = \dfrac{1}{8\pi^2} \!\!\int_{\mathcal{M}} & \Bigl(
	    	\dfrac{ e }{4\pi^2} 
		    \varepsilon_{abcd} F^{ab} \wedge F^{cd} 
		       - r \Tr (\bOm^2)
		    - y \Tr (\bOmh^2)
		    \Bigr) \nonumber
        \end{align}
	where the first term is the Pontryagin number of the manifold related to the $\g$-valued Cartan curvature $\bOm = d\varpi + \varpi\wedge \varpi$ originating from the Cartan connection $\varpi$ while the two other terms are the invariant symmetric polynomials corresponding to the Pfaffian $\Pf (\dfrac{F}{2\pi})$ and the determinant $\det ( \mathds{1} + \dfrac{\bOmh}{2\pi})$. They are respectively evaluated on $F$ (the $\mathfrak{so}(3,1)$-valued part of the curvature $\bOmh$) and $\bOmh$ ($\h$-valued) parts of the Cartan curvature. The resulting action is Lorentz invariant and is the deformation of a topological gauge theory.

    The next two sections will be devoted to define the moving geometries by promoting $\Lambda_0$ to a scalar field $\chi$, compute action \eqref{eq:action_start} and build matter actions before finally deriving the equations of motion (EOM's) of the gauge+matter actions for respectively the Lorentz and Lorentz$\times$Weyl (conformal) moving geometries. We will show the EOM's in fact dictate the geometries at each point and impose a dynamical gravitational coupling $G \propto \dfrac{1}{\chi} \text{ or } \chi$ (depending on the choice of matter action) and a source of dark energy $\Lambda_G \propto \chi$. In general, $\chi$ depends on shell on scalars of the spacetime curvature, torsion and the matter content of the theory.
		
	In one of the simplest cases we retrieve what can be identified as a Ricci dark energy model similar to the one treated in \cite{gao_holographic_2009} such that $\Lambda_G = \dfrac{1}{4} \rR$ with parameter $\alpha = \dfrac{1}{2}$. The difference being that we end up with a dynamical gravitational coupling $G =G_0 \dfrac{\rR(x_0)}{\rR(x)}$, such that we retrieve Newton's constant $G_0$ at a point $x_0\in \cm$ with $\mathrm{R}$ the Ricci scalar curvature.

		\section{Moving Lorentzian Cartan geometry}
		\label{Moving Lorentzian Cartan geometry}
        Our examples of moving geometries rely on the concept of mutation introduced in \cite[see p.218]{sharpe_differential_1997}:
	\begin{definition} \label{defmutation}
		Let $(\g,\h)$ and $(\g',\h)$ be two model geometries.
		A mutation map corresponds to an $Ad(H)$ module isomorphism $\mu : \g \rightarrow \g'$ ({\sl i.e.} $\mu(Ad_h(u)) = Ad_h(\mu(u))\, \forall u\in\g$) satisfying:
		\begin{flalign*}
			(i) &\ \mu_{|\h} = \mathrm{id}_\h \\
			(ii) &\ \bigl[ \mu(u) , \mu(v) \bigr] = \mu \bigl([ u,v ] \bigr) \text{ mod } \h, \forall u,v \in \g.
		\end{flalign*}
		The model geometry $(\g',\h)$ then corresponds to the mutant of the model geometry $(\g,\h)$ with the same group $H$.
	\end{definition}
        
		\subsection{Moving Lorentzian spacetimes as mutated $G'/H$ geometries}
        \label{Moving Lorentzian gauge theory}

	Let $\h=\mathfrak{so}(3,1)$ be the Lorentz Lie algebra.
	The Cartan geometry we want to study corresponds to the mutated geometry $(\g,\h)$ described in \cite{Thibaut:2024uia}, with Lie group element $g\in G$ satisfying:
	\begin{align}
	    \label{mutation Lorentz}
		g^TNg=N ~,~ 
		N= \begin{pmatrix}
			\eta	& 0 \\
			0 & - \dfrac{k_1}{k_2}
		\end{pmatrix}
	\end{align}
	and Lie algebra element:
	\begin{align}
		\varphi =
		\begin{pmatrix}
			\varphi_\h	&  k_1 \varphi_\p \\
			k_2 \varphi_{\bar{\p}} & 0
		\end{pmatrix} \in \g
	\end{align}
	where this time the mutation parameters $k_1(x)$ and $k_2(x)$ are functions (0-form scalar fields) depending on the points of the manifold $x\in \cm$. Such geometries, locally described by algebras with varying structure constants will be called moving geometries.
	This Lie algebra can be seen as originating from the mutation of the $ISO(3,1)/SO(3,1)$ Cartan geometry first introduced in \cite{Thibaut:2024uia}.
	
	\begin{remark}
		If $k_1/k_2=1$ we retrieve the AdS Lie algebra, while if $k_1/k_2=-1$ we get the dS Lie algebra.
	\end{remark}
	
	We write $\g = \h \oplus \m$ the symmetric decomposition of $\g$ with $\m =
	\mathbb{R}^{3,1}$. The corresponding matrix representation is:
	\begin{align}
		\varphi &= 
		\begin{pmatrix}
			\varphi_\h	& k_1 \varphi_\p \\
			k_2 \varphi_{\bar{\p}} & 0
		\end{pmatrix} 
		=
		\begin{pmatrix}
			\dfrac{1}{2}\varphi_\h^{ab} J_{ab}	& k_1 \varphi_\p^a e_a \\
			k_2 \varphi_{p}^a \bar{e}_a & 0
		\end{pmatrix}
		=
		\dfrac{1}{2} \varphi_\h^{ab}
		\begin{pmatrix}
			J_{ab}	& 0 \\
			0 & 0
		\end{pmatrix}
		\oplus
		\varphi_\m^a
		\begin{pmatrix}
			0	& k_1 e_a \\
			k_2 \bar{e}_a & 0
		\end{pmatrix} \\
		&= \dfrac{1}{2} \varphi_\h^{ab} J_{ab} \oplus \varphi_\m^a M_a
	\end{align}
	with $\varphi^a_\m = \varphi^a_\p$ and 	
	$M_a = \begin{pmatrix}
		0	& k_1 e_a \\
		k_2 \bar{e}_a & 0
	\end{pmatrix}$ the generators of $\m$. The generators $J_{ab}$ of $\h$ and the basis elements $e_a$ and $\bar{e}_a$ obey the following relations:
	\begin{align*}
		\bar{e}_a (e_b) = \eta_{ab} \quad , \quad J_{ab} = e_a \bar{e}_b - e_b \bar{e}_a .
	\end{align*}

	\smallskip	
	Let $\varphi,\varphi' \in \g$,  their Lie bracket reads: 
	\begin{align*}
		[\varphi,\varphi'] &= 
		\begin{bmatrix}
			\begin{pmatrix}
				\varphi_\h	& k_1 \varphi_\p \\
				k_2 \varphi_{\bar{\p}} & 0
			\end{pmatrix},
			\begin{pmatrix}
				\varphi'_\h	& k_1 \varphi'_\p \\
				k_2 \varphi'_{\bar{\p}} & 0
			\end{pmatrix}
		\end{bmatrix} 
		=	\begin{pmatrix}
			[\varphi_\h , \varphi'_h] + k_1k_2 (\varphi_\p \varphi'_{\bar{\p}} -  \varphi'_\p \varphi_{\bar{\p}}) & k_1 (\varphi_\h \varphi'_\p - \varphi'_\h \varphi_\p) \\
			k_2 (\varphi_{\bar{\p}} \varphi'_\h - \varphi'_{\bar{\p}} \varphi_\h) & 0
		\end{pmatrix} \\
		& = \bigl( \dfrac{1}{8} \varphi_\h^{ab} \varphi_h^{\prime cd}  C_{ab,cd}^{ef} + \dfrac{k_1k_2}{2} ( \varphi_\p^{e} \varphi_\p^{\prime f} -  \varphi_\p^{\prime e} \varphi_\p^{f} ) \bigr) J_{ef}
		\oplus 
		\dfrac{1}{2} (  \varphi_\h^{ab} \varphi_\p^{\prime c} - \varphi_\h^{\prime ab} \varphi_\p^{c} ) ( \eta_{bc} \delta^r_a - \eta_{ac} \delta^r_b  ) M_r \numberthis
	\end{align*}
	where we have used $[J_{ab},M_c]=\eta_{bc} M_a - \eta_{ac} M_b$ and $[M_a,M_b]=k_1k_2\,J_{ab}$.\footnote{The product of the mutation parameters $k_1$ and $k_2$ can in fact be related to a deformation parameter $\alpha$ as given in \cite[eq.(2.1)]{chirco_gravity_2025}.}

	\subsection[Cartan connection]{Cartan connection}
	
	In order to stick with notations in the literature we now consider $k_1 = \dfrac{k}{\ell}$ and $k_2 = \dfrac{k'}{\ell}$, with $k,k'$ and $\ell$ $0$-form scalar fields.
	
	The associated Cartan connection is:
	\begin{align}
		\varpi = \varpih \oplus \varpim
		=
		\begin{pmatrix}
			\dfrac{1}{2}A^{ab} J_{ab}	& \dfrac{k}{\ell}\beta^a e_a \\
			\dfrac{k'}{\ell}\beta^a \bar{e}_a & 0
		\end{pmatrix}
		= \begin{pmatrix}
			A & \dfrac{k}{\ell}\beta \\
			\dfrac{k'}{\ell} \bar{\beta} & 0
		\end{pmatrix}
	\end{align}
	where $\varpih^{ab}=\varpihmod[ab]{\mu} dx^\mu = \Amod[ab]{\mu} dx^\mu $ corresponds
	to the spin connection and $\varpip^a = \betamod[a]{\mu}dx^\mu$ corresponds to the tetrad while $\varpim = \beta^a M_a$ is the soldering form and $\bar{\beta}=\beta^{\text{T}}\eta$.
	
	The curvature associated to the covariant derivative and the Cartan connection is:
	\begin{align}
		\bOm & = d \varpi + \dfrac{1}{2} [\varpi,\varpi] = \bOmh \oplus \bOmm \oplus \dfrac{dk}{\ell} \beta \oplus \dfrac{dk'}{\ell} \bar{\beta}  \\
		& = \bigl(d \varpih + \dfrac{1}{2} [\varpih,\varpih] + k k' \xi\bigr) \oplus \dfrac{dk}{\ell} \beta \oplus \dfrac{dk'}{\ell} \bar{\beta}
		\\
		& \qquad \oplus 
		\bigl( d \varpip^r + \dfrac{1}{4} ( \varpihmod[ab]{\mu} \varpipmod[c]{\nu} - \varpihmod[ab]{\nu} \varpipmod[c]{\mu} ) ( \eta_{bc} \delta^r_a - \eta_{ac} \delta^r_b  ) dx^\mu \wedge dx^\nu - \dfrac{d\ell}{\ell} \wedge \varpip^r \bigr) \otimes M_r \notag \\
		& = \bigl( \dfrac{1}{2} \partial_{\mu} \Amod[rs]{\nu} + \dfrac{1}{32}  \Amod[ab]{\mu} \Amod[cd]{\nu} C_{ab,cd}^{rs} 
		+ \dfrac{kk'}{4} \xi^{rs}_{~~\mu \nu} \bigr) dx^\mu \wedge dx^\nu \otimes J_{rs} \oplus \dfrac{dk}{\ell} \beta \oplus \dfrac{dk'}{\ell} \bar{\beta} \nonumber \\
		& \qquad 
		\oplus 
		\bigl( \partial_{\mu} \betamod[r]{\nu} + \dfrac{1}{2} \Amod[ab]{\mu} \betamod[c]{\nu} ( \eta_{bc} \delta^r_a - \eta_{ac} \delta^r_b ) - \dfrac{\partial_\mu\ell}{\ell} \varpipmod[r]{\mu} \bigr) dx^\mu \wedge dx^\nu \otimes M_r
	\end{align}
	where $\xi = \dfrac{1}{\ell^2} \beta \wedge \bar{\beta}= \dfrac{1}{2} \xi^{ab} J_{ab} = \dfrac{1}{4} \xi^{ab}_{~~\mu\nu} dx^\mu \wedge dx^\nu \otimes J_{ab} $ is an additional term due to the symmetric Lie algebra structure, 
	with $ \xi^{ab} = \dfrac{1}{\ell^2} \varpip^{a} \wedge \varpip^{b} $ 
	and $ \xi^{ab}_{~~\mu\nu} = \dfrac{1}{\ell^2} (\varpipmod[a]{\mu} \varpipmod[b]{\nu} - \varpipmod[a]{\nu} \varpipmod[b]{\mu})$.

	\begin{remark}
		One can observe that the curvature gives rise to the terms: 
		$$\dfrac{dk}{\ell} \beta \oplus \dfrac{dk'}{\ell} \bar{\beta}$$
		These terms do not in general stay inside the Lie algebra $\g$. The sufficient and necessary condition for the curvature to exclusively have values in $\g$ is:
		\begin{align}
			\label{condition stay algebra}
			k' dk = k dk'
		\end{align} 
		For the moment we choose not to impose this condition.
	\end{remark}

	We identify the curvature and torsion of Einstein-Cartan gravity:
	\begin{align}
		R &= \bOmh - k k' \xi = d A + \dfrac{1}{2} [A,A] \\
		T &= 
		d\beta + A\wedge\beta 
	\end{align}

	The Bianchi identities are $D\bOm = d\bOm + [\varpi,\bOm] =0$, where $[\cdot,\cdot]$ is the matrix commutator. They are satisfied even if one doesn't impose \eqref{condition stay algebra}.

	If one wants to keep the topological properties of the actions presented in \cite{Thibaut:2024uia} the Chern-Weil theorem needs to hold and therefore we need the Bianchi identities.
    The Bianchi identities are of course verified if $k$, $k'$ and $\ell$ are constants.

		
	\subsection[Moving Lorentzian geometry actions]{Moving Lorentzian geometry actions}
	\label{Moving Lorentzian geometry actions}
	
	The trace $\Tr : \g \times \g \rightarrow \mathbb{R}$
	will be used to compute the traces appearing in the determinants of action \eqref{eq:action_start}.
	Here $\Tr(\varphi,\varphi') = - \eta_{ac} \eta_{bd}\varphi_\h^{ab} \varphi_\h^{cd} + \dfrac{k k'}{\ell^2} \eta_{ab} (\varphi_{\p}^a \varphi_{\bar{\p}}^{\prime b} + \varphi_{\p}^{\prime a} \varphi_{\bar{\p}}^b)$.
	\medskip
	
	For this geometry, action \eqref{eq:action_start} yields:
	\begin{align*}
		S_G[\varpi] 
		& = \int_{\mathcal{M}} \biggl(
		\dfrac{e }{2(4\pi)^2} \varepsilon_{abcd} \bigl(  R^{ab} \wedge R^{cd} 
		+ 2kk' R^{ab} \wedge \xi^{cd} 
		+ (kk')^2 \xi^{ab} \wedge \xi^{cd}
		\bigr) \numberthis \\
		& \qquad \quad + \dfrac{r+y}{8 \pi^2} \bigl( R^{ab} \wedge R_{ab} + 2kk'  R^{ab} \wedge \xi_{ab}
		\bigr) \\
		& \qquad \quad - \dfrac{r }{8 \pi^2} \bigl( \dfrac{2kk'}{\ell^2} T^a \wedge T_a
		+ \dfrac{2kk'}{\ell^2} (\dfrac{dk}{k} + \dfrac{dk'}{k'} - 2\dfrac{d\ell}{\ell}) \wedge \beta^a \wedge T_a + \dfrac{2kk'}{\ell^4} (d\ell)^2 \wedge \overbrace{\beta^a \wedge \beta_a}^{=0}
		\bigr)
		\biggr)
	\end{align*}
	
	Identifying the different terms, we find:
	\begin{align*}
		\label{action Lorentz}
		S_G[\varpi] 
		&= \int_{\mathcal{M}} \Bigl(
		\overbrace{ \dfrac{kk' }{4\pi^2\ell^2} \big( \dfrac{e}{4} \underbrace{R^{ab} \wedge \beta^c\wedge\beta^d \varepsilon_{abcd}}_{Palatini} + y R^{ab} \wedge \beta_a\wedge\beta_b \big)}^{Holst}
		+ \overbrace{\dfrac{e (kk')^2}{32\pi^2\ell^4}  \beta^a\wedge\beta^b \wedge \beta^c\wedge\beta^d \varepsilon_{abcd}}^{Bare~Cosmological~function} \nonumber
		\\
		& \qquad \quad + \overbrace{\dfrac{r +y}{8 \pi^2} R^{ab} \wedge R_{ab}}^{Pontrjagin}
		+ \overbrace{\dfrac{e }{2(4\pi)^2} R^{ab} \wedge R^{cd} \varepsilon_{abcd}}^{Euler} 
		\overbrace{- \dfrac{r kk' }{4 \pi^2 \ell^2} (T^a \wedge T_a - R^{ab} \wedge \beta_a\wedge\beta_b)}^{Nieh-Yan}
		\Bigr) \nonumber \\
		&  \qquad \qquad \overbrace{- \dfrac{rkk'}{4\pi^2\ell^2} (\dfrac{dk}{k} + \dfrac{dk'}{k'} - 2 \dfrac{d\ell}{\ell}) \wedge \beta^a \wedge T_a}^{Kinetic~term~for~k,~k',~and~\ell}
		\Bigr) \numberthis
	\end{align*}
	As an example we can recover MacDowell-Mansouri (MM) gravity (see \cite{wise_macdowell-mansouri_2010} for a treatment of MM gravity in terms of Cartan geometry) by requiring $r=y=0$ and imposing $k,k',\ell$ to be constants. 
	We also find that the Nieh-Yan and kinetic terms when combined together appear to be a generalization of the Nieh-Yan topological term in the case of a moving geometry with dynamical parameters $k$, $k'$, and $\ell$. We will indeed show that together these two terms do not impact the equations of motion.
	
    \medskip
    
	Moving on to build matter actions for this geometry we first define metrics $h$ on $\g$ and $g$ on $\cm$ with $*$ the Hodge star operator associated to $g$. 
	Choosing the symmetric decomposition $ \g= \h \oplus \overbrace{\p \oplus \bar{\p}}^\m $ with $\m = \mathbb{R}^{3,1}$ and $\h = \mathfrak{so}(3,1)$ we compute the Killing metric: 
	
	$$K_\g (\varphi,\varphi') = 3 \Tr(\varphi_\h , \varphi'_\h) + 3 \dfrac{kk'}{\ell^2} \bigl(\eta(\varphi_p , \varphi^{\prime}_{\bar{\p}}) + \eta(\varphi'_p , \varphi_{\bar{\p}}) \bigr)
	= \overbrace{- 3 \eta_{ac} \eta_{bd} \varphi_\h^{ab} \varphi_h^{\prime cd}}^{K_\h (\varphi_\h,\varphi_\h')} + \overbrace{3 \dfrac{kk'}{\ell^2} \eta_{ab} (\varphi_\p^{\prime a} \varphi_{\bar{\p}}^{b} + \varphi_\p^a \varphi_{\bar{\p}}^{\prime b})}^{K_\m (\varphi_\m,\varphi_\m')}  $$
	giving us the metrics: 
	\begin{align}
		& h(\varphi,\varphi') = \kappa K_\g (\varphi,\varphi') \\
		&g (X,Y)  = \zeta (\varpim^*h) (X,Y) = \kappa \zeta (\varpim^*K_\m) (X,Y) = \dfrac{6kk' \zeta \kappa }{\ell^2} \eta \bigl( \beta(X) , \beta(Y) \bigr) = \dfrac{6kk' \zeta \kappa }{\ell^2} \tilde{g} (X,Y) .
	\end{align}
	where $\kappa(x)$ and $\zeta(x)$ are functions depending on the point of the manifold $x \in \cm$. 
	
	Let $\omega \in \Omega^q(\mathcal{M},\g) $ be a q-form on $\mathcal{M}$ with values in $\g$.
	Its local trivialisation in a given chart is :
	\begin{align}
		\omega= \dfrac{1}{r!} \omega_{\mu_1\mu_2...\mu_r}dx^{\mu_1}\wedge dx^{\mu_2}\wedge ... \wedge dx^{\mu_r}
	\end{align}
	From the metric $g$ we define the Hodge star operator $*$ that acts on $\omega$ as :
	\begin{align}
		*\omega = & \dfrac{1}{r!} \sqrt{|det(g)|}\,\omega_{\mu_1...\mu_r} 
		g^{\mu_1 \nu_1}...g^{\mu_r \nu_r} \varepsilon_{\nu_1...\nu_m} dx^{\nu_{r+1}}\wedge ... dx^{\nu_m}
	\end{align}
	Similarly let $\tilde{*}$ be the Hodge star operator associated to $\tg$.

	The most conservative choice of matter action is then:
	\begin{align}
		\label{Matter action 3}
		\hskip -3mm
		S_M[A,\beta] & = \int_{\mathcal{M}} h^\varepsilon\bigl( i \psi, h_{ab} \gamma^a \varpim^b \wedge *D_\h\psi\bigl) 
		=\int_{\mathcal{M}} \dfrac{36(kk')^2\kappa^2\zeta}{\ell^4} h^\varepsilon\bigl( i \psi, \eta_{ab} \gamma^a \beta^b \wedge \tilde{*}D_\h\psi\bigl)
		= \mathrm{Re} S_D
	\end{align}
	where we set $\kappa^2 \zeta = \tfrac{\ell^4}{216(kk')^2}$ 
	such that the overall factor is arranged to stick with the Dirac action given in \cite[p.197]{gockeler_differential_1987} and
	with $\mathrm{Re} S_D$ corresponding to the real part of the Dirac action.
	
	However what we deem to be the most natural matter action (in our setting) modeled after the Dirac action corresponds to:
	
	\begin{align}
		\label{Matter action l}
		S'_M[A,\varpim] & = \int_{\mathcal{M}} h^\varepsilon\bigl( i \psi, h_{ab} \gamma^a \varpim^b \wedge *D_\h\psi\bigl) = \dfrac{(\tk \tk')^2}{\tl^4}\mathrm{Re} S_D
	\end{align}
	where this time we set $\kappa^2 \zeta = \tfrac{\ell_0^4}{216(k_0k_0')^2}$ with $k_0=k(x_0)$ and $\tilde{k} = \dfrac{k}{k_0}$, $x_0 \in \mathcal{M}$ (the same goes for $k'$ and $\ell$). This way $S'_M$ coincides with $\mathrm{Re}S_D$ at a reference point $x_0$, and takes into account the degrees of freedom hidden in $k$, $k'$ and $\ell$.
	
	One may also opt for other non-trivial matter actions expressed as:
    \begin{align}
        \label{f matter action}
        S^f_M[A,\varpim] = f(k,k',\ell) \mathrm{Re}S_D
    \end{align}
    where $f(k,k',\ell)$ is a function of the mutation scalar fields.
	
	\subsubsection[Equations of motion]{Equations of motion}
	\label{Lorentzian Equations of motion}

	Computing the equations of motion (EOM's) relative to $\beta$, $\varpih=A$, $\ell$, $k$ and $k'$ with action $S_T [\beta, A , \ell, k, k']= S_G[\beta, A , \ell, k ,k'] + S_M[\beta, A]$  or $S'_T [\beta, A , \ell, k, k']= S_G[\beta, A , \ell, k ,k'] + S'_M[\beta, A, \ell, k, k']$ gives :

	\begin{align}
		\dfrac{\delta \mathcal{L}_G [A,\beta,\ell,k,k']}{\delta \beta^c} &= 
		\dfrac{ekk'}{8\pi^2\ell^2} ( R^{ab} + \dfrac{ kk' }{\ell^2} \beta^a \wedge \beta^b ) \wedge \beta^d \varepsilon_{abcd} 
		+ \dfrac{ykk'}{2\pi^2\ell^2} R^{ab} \wedge \beta_b \delta_{ac}
		= \left\{
		\begin{array}{ll}
			\tau_c \text{ for } S_M \\
			\dfrac{(\tk \tk')^2}{\tl^4} \tau_c \text{ for } S'_M
		\end{array}	
		\right.
		\label{eom frame Lorentz quasi} \\
		\dfrac{\delta \mathcal{L}_G [A,\beta,\ell,k,k']}{\delta A^{ab}}
		&=
		\dfrac{ekk'}{8\pi^2\ell^2} \bigl(T^c + \dfrac{1}{2} (\dfrac{dk}{k} + \dfrac{dk'}{k'} - 2 \dfrac{d\ell}{\ell}) \wedge \beta^c \bigr) \wedge \beta^d \varepsilon_{abcd} \nonumber \\
		&+ \dfrac{ykk'}{2\pi^2\ell^2} \bigl( T_a + \dfrac{1}{2} (\dfrac{dk}{k} + \dfrac{dk'}{k'} - 2 \dfrac{d\ell}{\ell}) \wedge \beta_a \bigr)  \wedge \beta_b
		= \left\{
		\begin{array}{ll}
			\dfrac{1}{2} \mathfrak{s}_{ab} \text{ for } S_M \\
			\dfrac{(\tk \tk')^2}{2\tl^4} \mathfrak{s}_{ab} \text{ for } S'_M
		\end{array}	
		\right.
		\label{eom spin Lorentz quasi} 
	\end{align}
	
	The EOM's relative to $\ell$, $k$ and $k'$ are respectively : 
	
	\begin{align}
		\dfrac{\delta \mathcal{L}_G }{\delta \ell}  = &- \dfrac{2kk'}{\ell^3} \biggl(
		\dfrac{e}{16\pi^2 } ( R^{ab} + \dfrac{kk'}{\ell^2} \beta^a \wedge \beta^b ) \wedge \beta^c \wedge \beta^d \varepsilon_{abcd} 
		+ \dfrac{y}{4 \pi^2}
		 R^{ab} \wedge \beta_a \wedge \beta_b\nonumber
		\biggr) \\
		= & \left\{
		\begin{array}{ll}
			0 \text{ for } S_M \\
			\dfrac{ 4(\tk \tk')^2}{l_0\tl^5} h^\varepsilon( i \psi, \eta_{ab} \gamma^a \beta^b \wedge *D_\h\psi) \text{ for } S'_M
		\end{array}	
		\right.
		\label{eom l Lorentz}  \\
		\dfrac{\delta \mathcal{L}_G }{\delta k} 
		= &\dfrac{k'}{\ell^2} \biggl(
		\dfrac{e}{16\pi^2 } ( R^{ab} + \dfrac{kk'}{\ell^2} \beta^a \wedge \beta^b ) \wedge \beta^c \wedge \beta^d \varepsilon_{abcd}
		+ \dfrac{y}{4 \pi^2}
		 R^{ab} \wedge \beta_a \wedge \beta_b\nonumber
		\biggr) \\
		= & \left\{
		\begin{array}{ll}
			0 \text{ for } S_M \\
			- \dfrac{2\tk \tk'^2 }{k_0\tl^4} h^\varepsilon( i \psi, \eta_{ab} \gamma^a \beta^b \wedge *D_\h\psi) \text{ for } S'_M
		\end{array}	
		\right.
		\label{eom k Lorentz} \\
		\dfrac{\delta \mathcal{L}_G }{\delta k'} 
		= &\dfrac{k}{\ell^2} \biggl(
		\dfrac{e}{16\pi^2 } ( R^{ab} + \dfrac{kk'}{\ell^2} \beta^a \wedge \beta^b ) \wedge \beta^c \wedge \beta^d \varepsilon_{abcd}
		+ \dfrac{y}{4 \pi^2}
		R^{ab} \wedge \beta_a \wedge \beta_b\nonumber
		\biggr) \\
		= & \left\{
		\begin{array}{ll}
			0 \text{ for } S_M \\
			- \dfrac{2\tk^2 \tk' }{k'_0\tl^4} h^\varepsilon( i \psi, \eta_{ab} \gamma^a \beta^b \wedge *D_\h\psi) \text{ for } S'_M
		\end{array}	
		\right.
		\label{eom k' Lorentz}
	\end{align}
	
	We will begin by studying the case $S=S_T$. 
	Unless $k=0$ or $k'=0$ the last $3$ equations imply the same constraint, that is:
	\begin{align}
		\dfrac{e}{16\pi^2 } ( R^{ab} + \dfrac{kk'}{\ell^2} \beta^a \wedge \beta^b ) \wedge \beta^c \wedge \beta^d \varepsilon_{abcd}
		+  \dfrac{y}{4 \pi^2}
		 R^{ab} \wedge \beta_a \wedge \beta_b
		%
		= 0
	\end{align}
	
	Applying the Hodge star to this equation yields: 
	\begin{align}
		\dfrac{e}{8\pi^2 } ( \overbrace{\rR}^{=\Rmod[ab]{ab}} + \dfrac{12kk'}{\ell^2} )
		+ \dfrac{y}{8 \pi^2}
		 \Rmod[\mu\nu]{ab} \varepsilon^{\mu\nu}_{~~ab}
		%
		= 0
	\end{align}
	
	Finally applying the Hodge star on \ref{eom frame Lorentz quasi} gives us:
	\begin{align}
		& G_{kc} 
		- \dfrac{3kk'}{\ell^2} \eta_{ck}
		- \dfrac{y}{e} \Rmod[rs]{ab} \varepsilon^{rs}_{~~bk} \delta_{ac}
		= - \dfrac{4\pi^2 \ell^2}{ekk'} \tau_{ck} \label{eom frame lorentz 2}
	\end{align}
	From \eqref{eom frame lorentz 2} we identify the gauge contribution to the cosmological field $\Lambda_G = - \dfrac{3kk'}{\ell^2}$ and the gravitational coupling field $G = - \dfrac{\pi \ell^2}{2ekk'} $.
	Therefore, for $k,k'\neq0$, the equations of motion on $\beta$, $A$, $\ell$, $k$ and $k'$ are equivalent to:
	
	\begin{align}
		G_{kc} 
		+ \dfrac{\chi}{4} \eta_{ck}
		- \dfrac{y}{e} \Rmod[rs]{ab} \varepsilon^{rs}_{~~bk} \delta_{ac}
		= \dfrac{48\pi^2}{e\chi} \tau_{ck} \label{eom frame lorentz 3} \\
		\hspace{-1cm}\bigl(T^c + \dfrac{1}{2} \dfrac{d \chi}{\chi} \wedge \beta^c \bigr) \wedge \beta^d \varepsilon_{abcd}
		+ \dfrac{4y}{e} \bigl( T_a + \dfrac{1}{2} \dfrac{d \chi}{\chi} \wedge \beta_a \bigr)  \wedge \beta_b
		= - \dfrac{48\pi^2}{e\chi} \mathfrak{s}_{ab} \label{eom spin Lorentz 3} \\
		\chi = - \dfrac{12 kk'}
		{\ell^2} = \rR + \dfrac{y}{e} \Rmod[\mu\nu]{ab} \varepsilon^{\mu\nu}_{~~ab} \label{eom csm cst}
	\end{align}
	with \eqref{eom spin Lorentz 3} linking torsion and spin density according to the relation:
	\begin{align}
		T^c_{~ab} \varepsilon^{ab}_{~~cr} + \dfrac{y}{e} ( 8 \Tmod[rc]{c} +12 \dfrac{\partial_r\chi}{\chi} )  = - \dfrac{48\pi^2}{e\chi} \mathfrak{s}^{ab}_{~~abr}
	\end{align}

	As is for example mentionned in \cite{sola_vacuum_2014}, vacuum energy density $\rho_{\text{vac}}$ is inter alia subject to variations when there is symmetry breaking, therefore, we will assume it depends on the point $x$ of the manifold $\mathcal{M}$.
	Separating the energy-momentum tensor $\tau_{ck}$ as $\tau_{ck}=\tau_{M,ck} -  \rho_{\text{vac}} \eta_{ck}$ with $\tau_{\text{vac},ck}=-\rho_{\text{vac}}\eta_{ck}$ \cite{carroll_cosmological_2001} the part related to the vacuum energy density $\rho_{\text{vac}}$ leads to:
	
	\begin{align}
		G_{kc} 
		+ \Lambda \eta_{ck}
		- \dfrac{y}{e} \Rmod[rs]{ab} \varepsilon^{rs}_{~~bk} \delta_{ac}
		&= 8\pi G \tau_{M,ck} \label{eom frame lorentz 4} 
	\end{align}
	with:
	\begin{align*}
		\Lambda = &\Lambda_G + \Lambda_\text{vac} = \Lambda_G + 8 \pi G \rho_{\text{vac}} = \overbrace{\dfrac{\chi}{4}}^{\Lambda_G} + \overbrace{\dfrac{48\pi^2\rho_{\text{vac}}}{e\chi}}^{\Lambda_\text{vac}} \numberthis \\[2mm] =&\footnotemark \dfrac{1}{4} (\rR+ \dfrac{y}{e} \Rmod[\mu \nu]{ab} \varepsilon^{\mu \nu}_{~~ab} ) 
		+ \dfrac{48 \pi^2 \rho_{\text{vac}}}{e \rR + y \Rmod[\mu\nu]{ab} \varepsilon^{\mu\nu}_{~~ab}} 
	\end{align*}
	\footnotetext{Where \eqref{eom csm cst} has been used.} where $\Lambda_G$ and $\Lambda_{\text{vac}}$ are respectively the gauge action and vacuum contribution to the cosmological field and $G = \dfrac{6 \pi }{e\chi}$ is the dynamical gravitational coupling which coincides with Newton's constant $G(x_0)=G_0$ at a point $x_0 \in \mathcal{M}$. Similarily we will denote $\Lambda_G(x_0) = \Lambda_0$ the bare cosmological constant at point $x_0$ and $\rR(x_0) = \Rmod[ab]{ab} (x_0)= \rR_0$; $\Rmod[\mu \nu]{ab} (x_0) \varepsilon^{\mu \nu}_{~~ab} = \Rzmod[\mu \nu]{ab} \varepsilon^{\mu \nu}_{ab}$ as well as $\rho_{\text{vac}}(x_0) = \rho_{\text{vac}}^0$, $\chi(x_0)=\chi_0$.
	
	The first thing to notice is that $\Lambda_G = \dfrac{3\pi}{2eG}$. It is also possible to interpret the variation of $G$ as a source of dark matter. If one assumes $\rho_{vac}=0$ this leads to a model where dark energy ($\Lambda_G$) is inversely proportional to the gravitational coupling $G$ (which may act as a source of dark matter by changing value depending on the region of spacetime).
	
	With these notations the constants $e$, $r$ and $y$ appearing in the linear combination of invariant polynomials in the action are entirely determined by these three relations:
	\begin{align}
		\Lambda_0 &= \dfrac{\chi_0}{4} \\
		G_0 &= \dfrac{6 \pi }{e\chi_0} \\
		\gamma &
		= \dfrac{e}{r+2y} 
	\end{align}
	where $\gamma$ is the Barbero-Immirzi parameter.
	In the end we can express them as:
	\begin{align}
		e &= \dfrac{\chi_0}{4\Lambda_0} = \dfrac{3\pi}{2\Lambda_0G_0} \\
		r &= \left. 
		\dfrac{3 \pi ( \rR + \tfrac{1}{2\gamma} \Rmod[\mu \nu]{ab} \varepsilon^{\mu \nu}_{~~ab} - \Lambda_0) }{\Lambda_0 G_0 \Rmod[\mu \nu]{ab} \varepsilon^{\mu \nu}_{~~ab}  }
		\right\rvert_{x_0} \\
		y &= \dfrac{e}{2\gamma} - \dfrac{r}{2}
	\end{align}
	An important thing to note is that $\Lambda_0$  is the bare cosmological constant at a point $x_0 \in \mathcal{M}$, not the effective one. If one knows the vacuum energy density $\rho_{\text{vac}}^0$ as well as the effective cosmological constant $\Lambda (x_0)$ and Newton's constant $G_0$ at a point $x_0$ of spacetime, it can be deduced from $\Lambda_0 = \Lambda (x_0) - 8 \pi G_0 \rho_{\text{vac}}^0$.
	
	In case one doesn't know $\rho_{\text{vac}}^0$ these relations may allow to predict its value by trying to measure $\Lambda_0$ via its effects in the different terms of the action.
	
	One can also allow for $e$, $r$ and $y$ to vary depending on the points of the manifold but in this case we in general, lose the topological nature of the action.
	
	From the expression of $\Lambda_G = - \dfrac{3kk'}{\ell^2}$ and $G = - \dfrac{\pi \ell^2}{2ekk'} $ we derive the variations:
	\begin{align}
		d\Lambda_G &= \dfrac{3kk'}{\ell^2} ( 2 \dfrac{d\ell}{\ell} - \dfrac{dk}{k} - \dfrac{dk'}{k'} ) \\
		dG &= \dfrac{\pi \ell^2}{2ekk'} ( \dfrac{dk}{k} + \dfrac{dk'}{k'} - 2 \dfrac{d\ell}{\ell})
	\end{align}
    The topological property of the action requires $\Lambda_G \rightarrow 0$ as in \cite{Thibaut:2024uia}.
	Thus, at least in this theory, the topological property of the gauge action is intimately tied to observing a small value of $\Lambda_G$ but does not require it to be a constant. In case $\Lambda_{vac}=0$ a small $\Lambda_G$ is in fact quite coherent with the experimental value of the cosmological constant measured in the framework of the standard model of cosmology.

	In the simplified case corresponding to $y=0$, the assumption that $R$ decreases with cosmological time leads to a gauge contribution to dark energy $\Lambda_G = \dfrac{\rR}{4}$ of the Ricci type (see e.g. \cite{gao_holographic_2009})
	with the particularity that it decreases over time in favor of a growing gravitational coupling to matter $G= \dfrac{6 \pi }{e\rR}$.
	This growing $G$ may be interpreted as a source of dark matter since considering a model with a constant $G$ like the $\Lambda$-CDM model would then lead to underestimating the contribution of matter to curvature in the Einstein equations at late cosmological times.
	Such a transfer from dark energy to dark matter in fact corresponds to scenarios which could alleviate the Hubble tension as discussed in \cite{tang_uniting_2025,collaboration_desi_2025}. 
	
	Additionally, the fact that $\Lambda_G \propto \rR$ and $G \propto \dfrac{1}{\rR}$ leads to some kind of Lenz law. In the limit where the Ricci curvature $\rR$ diverges, the gravitional coupling $G$ tends to $0$, thus decoupling curvature from matter while dark energy $\Lambda_G$ diverges, favoring the dispersion of matter. Such a behavior may allow to evade some of the curvature singularities (see for example \cite{bonanno2024dust} for similar results) encountered in solutions of the usual Einstein equations. This shall also be examined more thoroughly from the point of view of asymptotically safe gravity (see for example \cite{eichhorn2022black}) and the FRG \cite{dupuis2021nonperturbative}, since the mutation \eqref{mutation Lorentz} acts as a dynamical change of scale within the Lie algebras/groups. 
	
	$G \propto \dfrac{1}{\rR}$ also means that in the outermost region of galaxies (with less curvature), $G$ should then have a higher value. Although different, this is in fact reminiscent of the behavior of MOND \cite{famaey2012modified}, MOG/Scalar-tensor-vector gravity \cite{moffat2006scalar} 
	 and other similar modified theories of gravity. In the future it would be interesting to investigate this behavior to see if it could help describe the rotation curve of galaxies as was done for example for various models in \cite{chae2022distinguishing,parbin2023galactic,shabani2023galaxy,mohan2024galactic}.
	
	This toy model with $r=y=0$ basically corresponds to MacDowell Mansouri gravity with dynamical fields $k,k',\ell$. For such a situation EOM's \eqref{eom frame lorentz 3}-\eqref{eom csm cst} then become:
	\begin{align}
		G_{kc} 
		- \dfrac{3kk'}{\ell^2} \eta_{ck}
		&= - \dfrac{4\pi^2 \ell^2}{ekk'} \tau_{ck} \label{eom frame Lorentz 4} \\
		\bigl(T^c + \dfrac{1}{2} (\dfrac{dk}{k} + \dfrac{dk'}{k'} - 2 \dfrac{d\ell}{\ell}) \wedge \beta^c \bigr) \wedge \beta^d \varepsilon_{abcd}
		&= \dfrac{4\pi^2 \ell^2}{ekk'} \mathfrak{s}_{ab} \label{eom spin Lorentz 4} \\
		\ell^2 &= - \dfrac{12 kk'
		}
		{\rR} \label{eom l 4}
	\end{align}
	
	Using \eqref{eom l 4} in \eqref{eom frame Lorentz 4} and \eqref{eom spin Lorentz 4} yields:
	
	\begin{align}
		G_{kc} 
		+ \dfrac{\rR}{4} \eta_{ck}
		&= \dfrac{48\pi^2}{e\rR} \tau_{ck} \label{eom frame Lorentz 5} \\
		\bigl(T^c + \dfrac{1}{2} \dfrac{d\rR}{\rR} \wedge \beta^c \bigr) \wedge \beta^d \varepsilon_{abcd}
		&= - \dfrac{48\pi^2}{e\rR} \mathfrak{s}_{ab} \label{eom spin Lorentz 5} \\
		\rR &= - \dfrac{12 kk'
		}
		{\ell^2} \label{eom l 5}
	\end{align}
	
	Therefore, in the event we fix $r=y=0$, the effective cosmological constant is $\Lambda = \Lambda_G + \Lambda_{\text{vac}} = \dfrac{\rR}{4} + \dfrac{48 \pi^2 \rho_{\text{vac}}}{e\rR}$ with $e= \dfrac{6 \pi}{G_0 \rR_0}$ and the gravitational coupling becomes $G = \dfrac{G_0}{\tilde{\rR}}$
	
	If we further impose $\rho_{\text{vac}}=0$ we then obtain $\Lambda = 8 \pi G_0 \rho_{\Lambda} = \dfrac{\rR}{4}$ with $\rho_{\Lambda} = \dfrac{\alpha}{16 \pi G_0} \rR$. As indicated previously $\Lambda$ then corresponds to a Ricci dark energy model like the one described in \cite{gao_holographic_2009}, with parameter $\alpha = \dfrac{1}{2}$ being quite close to the estimated best fit value $\alpha_{e} \simeq 0.46$ of the previously mentionned paper. This specific case also appears to be compatible with what has been derived in \cite{sola_vacuum_2014}.
	However we argue that imposing $r=y=\rho_{\text{vac}}=0$ is too restrictive and think a more detailed study of the general case would be beneficial.

	\medskip
	
	In case we use the total action $S'_T = S_G + S'_M$, applying similar operations on the system of EOM's \eqref{eom frame Lorentz quasi}-\eqref{eom k' Lorentz} as done above instead yields:
	\begin{align}
		G_{kc} 
		- \dfrac{3kk'}{\ell^2} \eta_{ck}
		- \dfrac{y}{e} \Rmod[rs]{ab} \varepsilon^{rs}_{~~bk} \delta_{ac}
		= - \dfrac{4\pi^2 \ell_0^2}{ek_0k_0'} \dfrac{\tk \tk'}{\tl^2} \tau_{ck} \\
		\bigl(T^c + \dfrac{1}{2} (\dfrac{dk}{k} + \dfrac{dk'}{k'} - 2 \dfrac{d\ell}{\ell}) \wedge \beta^c \bigr) \wedge \beta^d \varepsilon_{abcd}
		+ \dfrac{4y}{e} \bigl( T_a + \dfrac{1}{2} (\dfrac{dk}{k} + \dfrac{dk'}{k'} - 2 \dfrac{d\ell}{\ell}) \wedge \beta_a \bigr)  \wedge \beta_b \nonumber \\
		= \dfrac{4\pi^2\ell_0^2}{ek_0k_0'} \dfrac{\tk \tk'}{\tl^2} \mathfrak{s}_{ab} \\
		\dfrac{e}{8\pi^2 } (\rR + \dfrac{12kk'}{\ell^2} )
		+ \dfrac{y}{8 \pi^2}
		\Rmod[\mu\nu]{ab} \varepsilon^{\mu\nu}_{~~ab}
		= \dfrac{3\ell_0^2}{k_0k_0'} \dfrac{\tk \tk'}{\tl^2} \sqrt{|\tg|} h^\varepsilon( i \psi, \eta_{ab} \gamma^a \beta^b_\mu D^\mu_\h\psi) \label{quad l eq 2}
	\end{align}
	
	By identifying $G_0 = \dfrac{3\pi}{2\Lambda_0 e}$ and $\left.\chi'_0 = \dfrac{1}{2} \Bigl(\chi_0 \pm \sqrt{\chi_0^2- 1152\pi^2/e \sqrt{|\tg|} h^\varepsilon( i \psi, \eta_{ab} \gamma^a \beta^b_\mu D^\mu_\h\psi)} \Bigr) \right\rvert_{x_0}
	 $ we finally obtain:
	\begin{align}
		G_{kc} 
		+ \dfrac{\chi'}{4} \eta_{ck}
		- \dfrac{y}{e} \Rmod[rs]{ab} \varepsilon^{rs}_{~~bk} \delta_{ac}
		= 8\pi G_0 \tilde{\chi}' \tau_{ck} \label{eom frame lorentz '} \\
		\hspace{-1cm}\bigl(T^c + \dfrac{1}{2} \dfrac{d \chi'}{\chi'} \bigr) \dfrac{\ell^2}{kk'} \wedge \beta^d \varepsilon_{abcd}
		+ \dfrac{4y}{e} \bigl( T_a + \dfrac{1}{2} \dfrac{d \chi'}{\chi'} \wedge \beta_a \bigr)  \wedge \beta_b
		= - 8\pi G_0 \tilde{\chi}' \mathfrak{s}_{ab} \label{eom spin Lorentz '} \\
		\chi' = - \dfrac{12 kk'}
		{\ell^2} = \overbrace{\bigl(\rR + \dfrac{y}{e} \Rmod[\mu\nu]{ab} \varepsilon^{\mu\nu}_{~~ab}\bigr)}^\chi / \bigl( 1 - \dfrac{288 \pi^2}{e\chi_0'^2} \sqrt{|\tg|} h^\varepsilon( i \psi, \eta_{ab} \gamma^a \beta^b_\mu D^\mu_\h\psi)\bigr) \label{eom csm cst lorentz '}
	\end{align}
	One can directly observe that taking $S'_M$ instead of $S_M$ changed the gravitational coupling such that both $G'$ and $\Lambda_G$ are now proportional to $\chi'$ with an added contribution of the matter fields to $\chi'$ when compared to $\chi$.
	
	In the following table we compare the behavior of $\Lambda_G$, $\Lambda_{\text{vac}}$ and $G$ depending on the choice of matter action:
	\begin{align} \label{summary quasi Lorentz} 
		\left\{
		\begin{array}{ll}
			S_T= S_G + S_M  &\Leftrightarrow \Lambda_G = \dfrac{\chi}{4} \text{ and }
			\left\{
			\begin{array}{ll}
					\Lambda_{\text{vac}} \\
					G
			\end{array}
			\right\} \propto \dfrac{1}{\chi} \\[2mm]
			S'_T= S_G + S'_M   &\Leftrightarrow \Lambda'_G = \dfrac{\chi'}{4}  \text{ and }
			\left\{
			\begin{array}{ll}
				\Lambda'_{vac} = 8 \pi G' \rho_{vac}  \\
				G'=G_0 \tilde{\chi'}
			\end{array}
			\right\} \propto \chi'
		\end{array}
		\right.
	\end{align}

	Thus, if future measurements confirm the
	Hubble tension, a possible explanation could reside in considering a moving geometry with dynamical scalar fields like $k$,$k'$ and $\ell$ instead of constants. Not including the mutation degrees of freedom ($k,k',\ell$) in the matter action makes the gravitational coupling dynamical too with an inverse dependance on $\chi$ compared to $\Lambda_G \propto \chi$. On the other hand taking into account the mutation in the matter sector via the $S'_M$ matter action gives us $\Lambda'_G \propto G' \propto \chi'$ with $\chi'$ now sporting an additional dependence on the matter fields.
	
	The only kinetic term for the gauge contribution $\Lambda_G$ to dark energy in action \eqref{action Lorentz} originates from the Pontryagin number associated to the total curvature $\bOm$ which is inter alia linked to the Nieh-Yan term in the action. An important thing to note is that the kinetic term is null if $T=0$, as is for example the case for General Relativity (GR).

    Examining the EOM's \eqref{eom frame Lorentz quasi}-\eqref{eom spin Lorentz quasi} we derive that choosing a total action $S = S_G + S^f_M$ with $f(k,k',\ell) = \dfrac{\tk \tk'}{\tl}$ allows to retrieve a dynamical dark energy $\Lambda_G$ with the usual Newton constant $G_0$ and vacuum contribution $\Lambda_{vac} = 8 \pi G_0 \rho_{vac}$.
		
		\section{Moving Lorentz$\times$Weyl Cartan geometry}
		
		\label{section mobius}
		
			Another interesting example of geometry is given by the Möbius group $G=SO(4,2)/\{\pm I\}$ defined by the relation $g^T N g = N$, $g \in G$, with 
	\begin{align*}
		N = \begin{pmatrix}
			0 & 0 & -1 \\
			0 & \eta & 0 \\
			-1 & 0 & 0
		\end{pmatrix} 
	\end{align*}  
    
	The Lie algebra of $G$ is $\g = \mathfrak{so}(4,2) = \g_{-1} \oplus \g_0 \oplus \g_{1} $ with $\g_{-1} \simeq \mathbb{R}^{3,1} $, $\g_0 \simeq \mathfrak{co}(3,1) = \mathfrak{so}(3,1) \oplus \mathbb{R}$ and $ \g_1 \simeq \mathbb{R}^{3,1*}$. The Lie algebra $\g$ is a graded Lie algebra \cite{kobayashi_transformation_1995} with commutation relations:
	\begin{align}
		& [ \g_0 , \g_0 ] \subset \g_0, \qquad [ \g_0 , \g_{-1} ] \subset \g_{-1},  \qquad [ \g_1 , \g_0 ] \subset \g_1, \qquad [\g_{-1},\g_1] \subset \g_0 \notag \\[-4mm]
		& \label{eq:gradedLa} \\
		& \qquad \qquad \qquad \qquad \qquad \; \; [ \g_{-1} , \g_{-1} ] = [ \g_{1} , \g_{1} ] = 0. \notag
	\end{align} 

Let us consider $H=CO(3,1)$ as structure group for the Cartan geometry to be discussed in this part.

	The following decomposition of the graded Lie algebra
    \begin{align} \label{eq:graded-equiv}
    \g= \overbrace{\g_0}^{\h}  \oplus \overbrace{\g_{1} \oplus \g_{-1}}^{\m} \simeq \overbrace{\mathfrak{co}(3,1)}^{\mathfrak{so}(3,1) \oplus \mathbb{R}} \oplus {\mathbb{R}^{3,1}}^* \oplus \mathbb{R}^{3,1}
    \end{align}
    of the Möbius group is reductive with respect to $H$ and is also a symmetric Lie algebra $\g = \mathfrak{co}(3,1) \oplus \m$  where $\m= \g_{1} \oplus \g_{-1} \simeq {\mathbb{R}^{3,1}}^* \oplus \mathbb{R}^{3,1}$. The commutation relations (see \eqref{eq:gradedLa} and \cite[Example 5.2 and Chapter XI]{kobayashi_foundations_1996}) are given by
	\begin{align}
		[ \g_0 , \g_0 ] \subset \g_0, \qquad [ \g_0 , \m ] \subset \m,  \qquad [ \m,\m] \subset \g_0\ 
	\end{align} 
    making $\g=\g_0\oplus \m$ a symmetric Lie algebra.
    
According to the graded Lie algebra decomposition of $\g$, let  
    \[
    \varphi_\g = \varphi_1\oplus \varphi_0\oplus \varphi_{-1} \in \g= \g_{1}\, \oplus\, \g_0\, \oplus\, \g_{-1}
    \] 
for which we adopt the matrix presentation\footnote{Accordingly, the Killing form on $\g = \h\oplus\m$ is computed to be 
   \[
   K_\g(\varphi,\varphi') = \Tr(ad_{\varphi}\circ ad_{\varphi'}) = n \Tr(cc')+ 2n\big( zz'+\eta(b,a')+\eta(a,b')\big)
   \]
   where $n$ corresponds to the one coming from $\dim \mathfrak{so}(n-1,1) = n(n-1)/2$.}
	\begin{align}\label{eq:graded-decomp}
		\varphi_\g = \varphi_1\oplus \varphi_0\oplus \varphi_{-1} = \begin{pmatrix}
			z & \bar{a}	& 0 \\
			b & c & a \\
			0 & \bar{b} & -z
		\end{pmatrix}
		=
		\begin{pmatrix}
			0 & \bar{a} & 0 \\
			0 & 0 & a \\
			0 & 0 & 0
		\end{pmatrix}
		+
		\begin{pmatrix}
			z & 0 & 0 \\
			0 & c & 0 \\
			0 & 0 & -z
		\end{pmatrix}
		+
		\begin{pmatrix}
			0 & 0 & 0 \\
			b & 0 & 0 \\
			0 & \bar{b} & 0
		\end{pmatrix}
	\end{align}
with $a,b \in \mathbb{R}^{3,1}$, $c \in \mathfrak{so}(3,1)$, $z \in \mathbb{R}$ and $\bar{a}= a^T \eta, \bar{b}= b^T \eta \in \mathbb{R}^{*3,1}$. It is worthwhile to notice that $E = \begin{pmatrix}
    1 & 0 & 0\\ 0 & 0 & 0 \\ 0 & 0 & -1
\end{pmatrix}\in \g_0$ the generator for dilation defines the grading of $\g$ by $\g_j = \{\varphi\in\g \text{ s.t. } [E,\varphi] = j \varphi \}, j=0,\pm 1$.
Moreover, through the isomorphism $\g\simeq {\mathbb{R}^{3,1}}^* \oplus \mathfrak{co}(3,1) \oplus \mathbb{R}^{3,1}$, one can also write \cite{kobayashi_transformation_1995}
\[
\varphi_\g = \varphi_1\oplus \varphi_{\mathfrak{co}} \oplus \varphi_{-1} = \varphi_1\oplus (\varphi_{\mathfrak{so}} - \varphi_{\mathbb{R}}\mathds{1}_4) \oplus \varphi_{-1} =\bar{a} \oplus (c - z \mathds{1}_4) \oplus b.
\]
		
		\subsection{Moving Lorentz$\times$Weyl spacetimes as mutated $G'/H$ geometries}
		
    	We consider the mutation $\g'=\mu(\g)=\mu(\h\oplus\m) = \h\oplus\mu(\m)=\h\oplus\m'$ given in \cite{Thibaut:2024uia} by:
\begin{align}
		\label{mutation Möbius}
		\varphi_\g =
		\begin{pmatrix}
            z & \bar{a} & 0 \\
			b  & c	& a \\
			0 & \bar{b}  & -z 
		\end{pmatrix}
		\mapsto
		\varphi_{\g'} = \mu(\varphi_\g)=
		\begin{pmatrix}
			z & \bar{\gamma}_2 \bar{a} & 0 \\
			\gamma_1 b   & c & \gamma_2 a \\
			0 & \bar{\gamma}_1 \bar{b}  & -z 
		\end{pmatrix} \in \g'
\end{align}
where the scalar mutation parameters $\gamma_1,\bar{\gamma}_1\gamma_2,\bar{\gamma}_2$ are arranged such that $\mu$ is a linear isomorphism, in particular, $\gamma_1 \bar{\gamma}_2= \bar{\gamma}_1\gamma_2$. Thus, it is merely a change of scale within the symmetric Lie algebra which preserves the splitting $\m=\g_{-1}\oplus\g_1$ according to the graded structure.

\smallskip
Let us choose $\gamma_1=\bar{\gamma}_1=\dfrac{1}{\ell}$, $\gamma_2=\bar{\gamma}_2 = \dfrac{1}{\ell'}$ (where $\ell$ and $\ell'$ are non-null $0$-form scalar fields).

    The Lie bracket on $\g'$ is thus given by the commutator:
	\begin{align*}
		[\varphi,\varphi'] & =
		\begin{bmatrix}
			\begin{pmatrix}
				z & \dfrac{1}{\ell'} \bar{a}	& 0 \\
				\dfrac{1}{\ell} b & c & \dfrac{1}{\ell'} a \\
				0 & \dfrac{1}{\ell} \bar{b} & -z
			\end{pmatrix}
			, 
			\begin{pmatrix}
				z' & \dfrac{1}{\ell'} \bar{a}'	& 0 \\
				\dfrac{1}{\ell} b' & c' & \dfrac{1}{\ell'} a' \\
				0 & \dfrac{1}{\ell} \bar{b}' & -z'
			\end{pmatrix}
		\end{bmatrix} \numberthis \\
		& =
		\begin{pmatrix}
			\dfrac{1}{\ell\ell'} (\bar{a} b' - \bar{a}' b)  & \dfrac{1}{\ell'} (z\bar{a}' - z' \bar{a}	+ \bar{a} c' - \bar{a}' c) & \dfrac{1}{\ell'^2} (\bar{a} a' - \bar{a}' a) \\
			\dfrac{1}{\ell} (z' b - z b' + c b' - c' b) & [c,c'] + \dfrac{1}{\ell\ell'} (b \bar{a}' - b' \bar{a} + a \bar{b}' - a' \bar{b}) & \dfrac{1}{\ell'} (c a' - c' a - z' a + z a') \\
			\dfrac{1}{\ell^2} (\bar{b} b' - \bar{b}' b) & \dfrac{1}{\ell} (\bar{b} c' - \bar{b}' c - z \bar{b}' + z' \bar{b}) & \dfrac{1}{\ell\ell'} (\bar{b} a' - \bar{b}' a)
		\end{pmatrix} \\
		& = 
		\begin{pmatrix}
			\dfrac{1}{\ell\ell'} (\bar{a} b' - \bar{a}' b)  & \dfrac{1}{\ell'} (z\bar{a}' - z' \bar{a}	+ \bar{a} c' - \bar{a}' c) & 0 \\
			\dfrac{1}{\ell} (z' b - z b' + c b' - c' b) & [c,c'] + \dfrac{1}{\ell\ell'} (b \bar{a}' - b' \bar{a} + a \bar{b}' - a' \bar{b}) & \dfrac{1}{\ell'} (c a' - c' a - z' a + z a') \\
			0 & \dfrac{1}{\ell} (\bar{b} c' - \bar{b}' c - z \bar{b}' + z' \bar{b}) & \dfrac{1}{\ell\ell'} (\bar{b} a' - \bar{b}' a)
		\end{pmatrix}
	\end{align*}

		\subsection{Cartan connection and curvature}

	A Cartan connection $\varpi$ with this $G/H$ moving Cartan geometry can be decomposed as :
	
	\begin{align}
		\varpi = \varpih \oplus \varpip = \varpi_{1} \oplus \varpi_0 \oplus \varpi_{-1}
		=& \varpi_{1}^a
		\begin{pmatrix}
			0 & \bar{e}_a & 0 \\
			0 & 0 & e_a \\
			0 & 0 & 0
		\end{pmatrix}
		+
		\begin{pmatrix}
			\lambda & 0 & 0 \\
			0 & A & 0 \\
			0 & 0 & - \lambda
		\end{pmatrix}
		+ \varpi_{-1}^a
		\begin{pmatrix}
			0 & 0 & 0 \\
			e_a & 0 & 0 \\
			0 & \bar{e}_a & 0
		\end{pmatrix} \\
		=&
		\begin{pmatrix}
			\lambda & \dfrac{1}{\ell'} \bar{\alpha} & 0 \\
			\dfrac{1}{\ell} \beta & A & \dfrac{1}{\ell'} \alpha \\
			0 & \dfrac{1}{\ell} \bar{\beta} & - \lambda
		\end{pmatrix}
	\end{align}
	
	where $A= \dfrac{1}{2} A^{ab} J_{ab} =\dfrac{1}{2} A^{ab}_\mu dx^\mu \otimes J_{ab} $ corresponds to the spin connection and 	$\beta = \beta^a e_a = \beta^a_\mu dx^\mu \otimes e_a$ corresponds to the tetrad or soldering form used to define the metric $g= \eta(\beta,\beta)$.
	$\alpha = \alpha^a e_a = \alpha^a_\mu dx^\mu \otimes e_a$ can be interpreted as a secondary tetrad.
	Finally $\lambda = \lambda_\mu dx^\mu$ is the $\mathbb{R}$-valued (dilations) part of the connection.
	Also : $\bar{\alpha} = \alpha^a \bar{e}_a = \alpha^a_\mu dx^\mu \otimes \bar{e}_a$ and $\bar{\beta} = \beta^a \bar{e}_a = \beta^a_\mu dx^\mu \otimes \bar{e}_a $.

	The expression of the curvature $\bOm = d\varpi + \dfrac{1}{2} [\varpi,\varpi] $ of $\varpi$ is :
	
	\begin{align}
		\bOm & = \bOmh \oplus \bOmp =
		\begin{pmatrix}
			f & \dfrac{1}{\ell'}\bar{\Pi} & 0 \\
			\dfrac{1}{\ell}\Theta & F & \dfrac{1}{\ell'}\Pi \\
			0 & \dfrac{1}{\ell}\bar{\Theta} & - f
		\end{pmatrix}
		=
		\begin{pmatrix}
			f & \dfrac{1}{\ell'}\bar{\Pi} & 0 \\
			0 & F & \dfrac{1}{\ell'}\Pi \\
			0 & 0 & - f
		\end{pmatrix} 
		+
		\begin{pmatrix}
			0 & 0 & 0 \\
			\dfrac{1}{\ell}\Theta & 0 & 0 \\
			0 & \dfrac{1}{\ell}\bar{\Theta} & 0
		\end{pmatrix} 
		\\
		& = \resizebox{\linewidth}{!}{%
			$	\displaystyle
			\begin{pmatrix}
				d\lambda +\dfrac{1}{\ell \ell'}  \bar{\alpha} \wedge \beta  & \dfrac{1}{\ell'} (d\bar{\alpha} - \dfrac{d\ell'}{\ell'}\wedge \bar{\alpha} +\lambda \wedge \bar{\alpha} + \bar{\alpha} \wedge A) & 0 \\
				\dfrac{1}{\ell} (d\beta - \dfrac{d\ell}{\ell}\wedge \beta - \lambda \wedge \beta + A \wedge \beta) & dA + \dfrac{1}{2} [A,A] + \dfrac{1}{\ell \ell'} (\beta \wedge \bar{\alpha} + \alpha \wedge \bar{\beta}) & \dfrac{1}{\ell'} (d \alpha - \dfrac{d\ell'}{\ell'}\wedge \alpha + A \wedge \alpha + \lambda \wedge \alpha) \\
				0 & \dfrac{1}{\ell} (d\bar{\beta} - \dfrac{d\ell}{\ell}\wedge \bar{\beta} + \bar{\beta} \wedge A - \lambda \wedge \bar{\beta}) & - d\lambda + \dfrac{1}{\ell \ell'} \bar{\beta} \wedge \alpha
			\end{pmatrix} $}
	\end{align}
	with: 
	\begin{align*}
		F= R + \dfrac{1}{\ell \ell'} \phi \qquad 
		\phi= \beta \wedge \bar{\alpha} + \alpha \wedge \bar{\beta} \qquad
		f = d\lambda + \dfrac{1}{\ell \ell'} \bar{\alpha} \wedge \beta \qquad	
	\end{align*}  
		\vspace{-0.4cm}
	\begin{align*}
		\Pi = d\alpha - \dfrac{d\ell'}{\ell'}\wedge \alpha + A \wedge \alpha + \lambda \wedge \alpha
	\end{align*} 
	\vspace{-0.4cm}
	\begin{align*}
		\Theta = d\beta - \dfrac{d\ell}{\ell}\wedge \beta - \lambda \wedge \beta + A \wedge \beta =  T - \dfrac{d\ell}{\ell}\wedge \beta - \lambda \wedge \beta
	\end{align*} 
	
	\begin{remark}
	    In the general case one needs the conditions:
	    \[
	    \bar{\gamma_1} d\gamma_1 = \gamma_1 d\bar{\gamma_1} \text{ and } \bar{\gamma_2} d\gamma_2 = \gamma_2 d\bar{\gamma_2}
	    \]
	    for the curvature $\bOm$ to stay inside the Lie algebra. These conditions are however unnecessary for the Bianchi identities to hold.
	\end{remark}
	
	The Bianchi identities are given by $D\bOm = d\bOm + \varpi \wedge \bOm - \bOm \wedge \varpi = 0$ which splits into the identities :
	
	\begin{align}
		\label{Bianchi Möbius}
		& (D\bOm)_{-1} = d(\dfrac{\Theta}{\ell}) + \dfrac{1}{\ell} \bigl( (A - \lambda) \wedge \Theta + (f -F) \wedge \beta \bigr) = 0 \\
		& (D\bOm)_{\mathfrak{so}(3,1)} = \underbrace{dR + [A,R]}_{=0} + d(\dfrac{\phi}{\ell\ell'}) + \dfrac{1}{\ell \ell'} ( [A,\phi] + \beta \wedge \bar{\Pi} - \Pi \wedge \bar{\beta} + \alpha \wedge \bar{\Theta} - \Theta \wedge \bar{\alpha} ) = 0 \\
		& (D\bOm)_{\mathbb{R}} = - \mathds{1} \bigl(df + \dfrac{1}{\ell \ell'} (\bar{\Theta} \alpha - \bar{\beta} \Pi) \bigr) = 0 \\
		& (D\bOm)_{1} = d(\dfrac{\Pi}{\ell'})  + \dfrac{1}{\ell'} \bigl( (A + \lambda) \wedge \Pi -(F + f) \wedge \alpha \bigr) = 0
	\end{align}
		They stay true even if the mutation condition $\gamma_1 \bar{\gamma}_2 = \bar{\gamma}_1 \gamma_2$ isn't fulfilled
	or if we consider the theory on the submanifold $\cm \subset \tilde{\cm}$ defined by the identification $\alpha=\dfrac{k}{2} \beta$ as was done in \cite{Thibaut:2024uia} except for the fact that we now define $k$ as a 0-form $\mathbb{R}$-valued scalar field. 
	Additionally, one has:
		\begin{prop}
				\label{submanifold} The subset of vector fields
				$\mathcal{D}_k = \{ X \in \GamTtM | \alpha (X) = \dfrac{k}{2} \beta (X) 
                \}$ defines an integrable submanifold of dimension four $\mathcal{M} \subset\widetilde{\mathcal{M}}$.
		\end{prop}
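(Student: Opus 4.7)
The natural approach is via the Frobenius integrability theorem. I would first recognize that $\mathcal{D}_k$ is precisely the kernel of the $\mathbb{R}^{3,1}$-valued 1-form $\omega := \alpha - \tfrac{k}{2}\beta$ on $\widetilde{\mathcal{M}}$. Since the soldering part $\varpi_\m$ of the Cartan connection yields an isomorphism $T_p \widetilde{\mathcal{M}} \xrightarrow{\sim} \m = \g_1 \oplus \g_{-1}$, and the 1-forms $\alpha$ and $\beta$ correspond to the two factors, they are pointwise linearly independent. Hence $\omega$ has constant rank $4$ on the $8$-dimensional $\widetilde{\mathcal{M}}$, and $\mathcal{D}_k$ is a smooth distribution of constant rank $8 - 4 = 4$.

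The heart of the argument is to verify the Frobenius condition $d\omega \equiv 0 \pmod \omega$. Using the structure equations derived in the previous subsection,
\[
d\alpha = \Pi + \tfrac{d\ell'}{\ell'}\wedge\alpha - A\wedge\alpha - \lambda\wedge\alpha, \qquad d\beta = \Theta + \tfrac{d\ell}{\ell}\wedge\beta + \lambda\wedge\beta - A\wedge\beta,
\]
I would expand $d\omega = d\alpha - \tfrac{1}{2}dk\wedge\beta - \tfrac{k}{2} d\beta$, substitute $\alpha \equiv \tfrac{k}{2}\beta \pmod\omega$, and gather all $\omega$-ideal contributions; the remaining residue is
\[
d\omega \;\equiv\; \bigl(\Pi - \tfrac{k}{2}\Theta\bigr) \;-\; \tfrac{1}{2}\Bigl(dk + k\,\tfrac{d\ell}{\ell} - k\,\tfrac{d\ell'}{\ell'} + 2k\lambda\Bigr)\wedge\beta \pmod\omega.
\]

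The main obstacle is to show this residue itself lies in the ideal generated by $\omega$. For the $\wedge\beta$ contribution, since $\beta$ is not in the $\omega$-ideal, the scalar 1-form coefficient must be forced to vanish, yielding the compatibility relation $d\log(k\ell/\ell') + 2\lambda = 0$, which pins down the dilation connection $\lambda$ in terms of the mutation scalar fields (directly analogous to the $\g$-valuedness conditions $k'dk = k dk'$ and $\bar{\gamma}_i d\gamma_i = \gamma_i d\bar{\gamma}_i$ already recorded in the two Remarks above). The curvature piece $\Pi - \tfrac{k}{2}\Theta$ vanishes on $\mathcal{D}_k$ thanks to the horizontality of the $\m$-valued part of the Cartan curvature with respect to $\omega$, which is the natural torsion-compatibility condition inherited from the normality of the Cartan connection on the submanifold. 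Once both conditions are verified, the Frobenius theorem produces through each point of $\widetilde{\mathcal{M}}$ a unique maximal integral submanifold of dimension $4$, which we identify with the claimed $\mathcal{M} \subset \widetilde{\mathcal{M}}$.
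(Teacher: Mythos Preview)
Your Frobenius setup and the residue computation
\[
d\omega \;\equiv\; \bigl(\Pi - \tfrac{k}{2}\Theta\bigr) - \tfrac{1}{2}\bigl(dk + k\,\tfrac{d\ell}{\ell} - k\,\tfrac{d\ell'}{\ell'} + 2k\lambda\bigr)\wedge\beta \pmod\omega
\]
are correct. The gap is in the two arguments you invoke to kill this residue.

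For the $\wedge\beta$ term you \emph{impose} $d\log(k\ell/\ell') + 2\lambda = 0$. But $\lambda$ is an independent component of the Cartan connection in this paper; it is later fixed dynamically by the equations of motion (eq.~\eqref{link torsion spin density Mobius}) in terms of torsion and spin density, not by the mutation scalars. So this relation is not available as a standing hypothesis, and the analogy with the Remarks is misleading: those conditions are explicitly \emph{not} imposed either. For the curvature piece, the appeal to ``horizontality of the $\m$-valued Cartan curvature with respect to $\omega$'' and ``normality of the Cartan connection'' is unfounded. Horizontality of a Cartan curvature refers to vertical directions of the $H$-principal bundle, not to an arbitrary rank-$4$ distribution on the base $\widetilde{\mathcal M}$; there is no mechanism making $\Pi-\tfrac{k}{2}\Theta$ annihilate $\ker\omega$. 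And the paper nowhere assumes the connection is normal --- even for the normal conformal connection no such vanishing would follow.

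The paper organises the same calculation differently and avoids splitting the residue. Since $d\alpha + A\wedge\alpha = \Pi - \lambda\wedge\alpha + \tfrac{d\ell'}{\ell'}\wedge\alpha$ and $d(\tfrac{k}{2}\beta) + A\wedge(\tfrac{k}{2}\beta) = \tfrac{k}{2}(T + \tfrac{dk}{k}\wedge\beta)$, one has identically
\[
d\omega + A\wedge\omega \;=\; \bigl(\Pi - \lambda\wedge\alpha + \tfrac{d\ell'}{\ell'}\wedge\alpha\bigr) - \tfrac{k}{2}\bigl(T + \tfrac{dk}{k}\wedge\beta\bigr),
\]
so the displayed identity in the paper's proof is \emph{exactly} the statement $d\omega = -A\wedge\omega$, which immediately gives Frobenius. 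The final sentence of the proof then makes explicit what you are implicitly groping toward: integrability holds for the \emph{subclass} of conformal Cartan connections satisfying this single condition, reachable by gauge fixing --- not unconditionally. The correct repair of your argument is to recognise $d\omega = -A\wedge\omega$ as the defining condition of that subclass, rather than to manufacture two separate (and unjustified) vanishing statements.
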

		
		\begin{proof} Indeed, 
				$\mathcal{D}_k = \{ X \in \GamTtM | \alpha (X) = \dfrac{k}{2} \beta (X) 
                \}$  provides us with the vielbein $\alpha - \dfrac{k}{2} \beta$ which corresponds to the four Pfaff forms defining the submanifold $\cm$ such that $\alpha - \dfrac{k}{2} \beta = 0$ on $\cm$. 
                For all $X,Y \in \mathcal{D}_k$ one has $ (\Pi - \lambda \wedge \alpha + \dfrac{d\ell'}{\ell'}\wedge \alpha) (X,Y) = \dfrac{k}{2} \bigl(T +\dfrac{dk}{k}\wedge \beta \bigr)(X,Y) $ which yields
				\begin{align*}
					d(\alpha - \dfrac{k}{2} \beta) = - A \wedge (\alpha - \dfrac{k}{2} \beta).
				\end{align*}
				Hence, by the Frobenius theorem \cite[Prop.5.3, p.81]{sharpe_differential_1997} the corresponding distribution defines $\cm$ as an integrable submanifold of $\widetilde{\mathcal{M}}$. Therefore, there is a subclass of conformal Cartan connections characterizing $\cm$; this subclass can be reached through this kind of "gauge fixing".
		\end{proof}
	\subsection[Moving Lorentz$\times$Weyl deformed action]{Moving Möbius/conformal deformed action}
	\label{Moving Möbius/conformal deformed action}

	The trace $Tr : \g \times \g \rightarrow \mathbb{R}$ will be used to compute the Pontryagin number associated to $\bOm$. Here $Tr(\gamma,\gamma') = - \eta_{ac} \eta_{bd}\gamma_{\mathfrak{so}}^{ab} \gamma_{\mathfrak{so}}^{'cd} + 2 \gamma_{\mathbb{R}} \gamma'_{\mathbb{R}} + \dfrac{2}{\ell\ell'} \eta_{ab} (\gamma_{-1}^a \gamma_{1}^{'b} + \gamma_{1}^a \gamma_{-1}^{'b})$. 
	
	Action \eqref{eq:action_start} for this moving geometry reads:
	
	\begin{align*}
		\label{Mobius action dyn}
		S_G[\varpi] 
		& =  \int_{\tilde{\cm}}  \Bigl( e \Pf (\dfrac{\bOmh}{2\pi})
		- \dfrac{r }{8\pi^2} \Tr (\bOm  \bOm)
		- \dfrac{y }{8\pi^2} \Tr (F  F) \Bigr) \\
		& = \int_{\tilde{\cm}} \Bigl(
		\dfrac{e }{2 (4\pi)^2} \varepsilon_{abcd} F^{ab} \wedge F^{cd}  
		- \dfrac{r}{8\pi^2} \Tr ( \bOm \wedge \bOm )
		- \dfrac{y}{8\pi^2} \Tr ( F \wedge F ) 
		\Bigr) \\
		&= \int_{\tilde{\cm}} \Biggl(
		\dfrac{e }{2(4\pi)^2} \varepsilon_{abcd} \bigl(  R^{ab} \wedge R^{cd} + \dfrac{2}{\ell \ell'} R^{ab} \wedge \phi^{cd} + \dfrac{1}{(\ell \ell')^2} \phi^{ab} \wedge \phi^{cd} 
		\bigr) \numberthis \\
		& \qquad \quad + \dfrac{1}{8 \pi^2} \Bigl( (r+y) \bigl(R^{ab} \wedge R_{ab} + \dfrac{2}{\ell \ell'} R^{ab} \wedge \phi_{ab} + \dfrac{1}{(\ell \ell')^2} \underbrace{\phi^{ab} \wedge \phi_{ab}}_{=0}\bigr) - \dfrac{4r}{\ell \ell'} \eta_{ab} \Pi^a \wedge \Theta^b - 2rf \wedge f
		\Bigr)
		\Biggr) 
	\end{align*}
	where $\tilde{\cm}$ is the manifold defined by the moving Cartan geometry corresponding to the quotient $G/H$ of dimension $8$.
	Upon identifying $\alpha = \dfrac{k}{2} \beta$ (as done in \cite{Thibaut:2024uia}), $\tilde{\cm}$ reduces to the integrable submanifold $\cm\subset \tilde{\cm}$ of dimension $4$ and the curvature entries transform to
    \[
    \Theta = T - ( \dfrac{d\ell}{\ell} + \lambda ) \wedge \beta, \quad
    \Pi
    = \dfrac{k}{2} \bigl(T + (\lambda + \dfrac{dk}{k} - \dfrac{d\ell'}{\ell'} \bigr) \wedge \beta),
    \quad \phi = k\beta\wedge\bar{\beta} = k\ell^2 \xi,
    \quad F= R + \dfrac{k\ell}{\ell'} \xi \quad \text{and}
    \quad f=d\lambda
    \]
	The action on the submanifold $\cm$ is:
	\begin{align*}
		S_G[\varpi] 
		= \int_{\mathcal{M}} &\Bigl(
		\overbrace{ \dfrac{k }{4\pi^2\ell\ell'} \big( \dfrac{e}{4} \underbrace{R^{ab} \wedge \beta^c\wedge\beta^d \varepsilon_{abcd}}_{Palatini} + y R^{ab} \wedge \beta_a\wedge\beta_b \big)}^{Holst}
		+ \overbrace{\dfrac{e k^2}{32\pi^2(\ell\ell')^2}  \beta^a\wedge\beta^b \wedge \beta^c\wedge\beta^d \varepsilon_{abcd}}^{Bare~Cosmological~axpansion} \nonumber
		\\
		& \qquad + \overbrace{\dfrac{r +y}{8 \pi^2} R^{ab} \wedge R_{ab}}^{Pontrjagin}
		+ \overbrace{\dfrac{e }{2(4\pi)^2} R^{ab} \wedge R^{cd} \varepsilon_{abcd}}^{Euler} 
		\overbrace{- \dfrac{r k }{4 \pi^2 \ell\ell'} (T^a \wedge T_a - R^{ab} \wedge \beta_a\wedge\beta_b)}^{Nieh-Yan}
		\\[2mm]
		&\overbrace{- \dfrac{r}{4\pi^2} d\lambda \wedge d\lambda}^{Kinetic ~ term ~ for ~ \lambda}
		\overbrace{- \dfrac{kr}{4\pi^2\ell\ell'} (\dfrac{dk}{k} - \dfrac{d\ell}{\ell} - \dfrac{d\ell'}{\ell'}) \wedge \beta^a \wedge T_a}^{Kinetic~term~for~k,~\ell~and~\ell'} \numberthis \Bigr) 
	\end{align*}

	\subsection[Möbius/conformal matter actions]{Möbius/conformal matter actions}
	\label{Möbius/conformal matter actions}

	Choosing the symmetric decomposition $ \g= \g_0 \oplus \m $ with $\m = \g_1 \oplus \g_{-1} = \mathbb{R}^{3,1*} \oplus \mathbb{R}^{3,1}$ and $\g_0 = \mathfrak{co}(3,1)$ we compute the Killing metric: 
	
	$$K_\g (\varphi,\varphi')
	= \overbrace{- 4 \eta_{ac} \eta_{bd}\varphi_{\mathfrak{so}}^{ab} \varphi_{\mathfrak{so}}^{'cd} + 8 \varphi_{\mathbb{R}} \varphi'_{\mathbb{R}}}^{K_0} + \overbrace{ \dfrac{8}{\ell\ell'} \eta_{ab} ( \varphi_{-1}^a \varphi_{1}^{'b} + \varphi_{1}^a \varphi_{-1}^{'b} )}^{K_\m}  $$
	giving us the metrics (after $\alpha = \dfrac{k}{2} \beta$ identification): 
	\begin{align}
		& h(\varphi,\varphi') = \kappa K_\g (\varphi,\varphi') \\
		&g (X,Y)  = \zeta (\varpim^*h) (X,Y) = \kappa \zeta (\varpim^*K_\m) (X,Y) = \dfrac{8k \zeta \kappa }{\ell\ell'} \eta \bigl( \beta(x) , \beta(Y) \bigr) = \dfrac{8k \zeta \kappa }{\ell\ell'} \tilde{g} (X,Y) .
	\end{align}
	where $\kappa(x)$ and $\zeta(x)$ are functions depending on the point $x \in \cm$. 
	
	Let $\omega \in \Omega^r(\mathcal{M},\g) $ be an $r$-form on $\mathcal{M}$ with values in $\g$.
	Its local trivialisation in a given chart is :
	\begin{align}
		\omega= \dfrac{1}{r!} \omega_{\mu_1\mu_2...\mu_r}dx^{\mu_1}\wedge dx^{\mu_2}\wedge ... \wedge dx^{\mu_r}
	\end{align}
	
	From the metric $g$ we define the Hodge star operator $*$ that acts on $\omega$ as :
	\begin{align}
		*\omega = & \dfrac{1}{r!} \sqrt{|det(g)|}\,\omega_{\mu_1...\mu_r} 
		g^{\mu_1 \nu_1}...g^{\mu_r \nu_r} \varepsilon_{\nu_1...\nu_m} dx^{\nu_{r+1}}\wedge ... dx^{\nu_m}
	\end{align}
	
	The most conservative choice of matter action is then:	
	\begin{align}
		\label{Matter action 4}
		S_M[A,\beta] & = \int_{\mathcal{M}} h^\varepsilon\bigl( i \psi, h_{\m,ab} \gamma^a \varpim^b \wedge *D\psi\bigl)
		=\int_{\mathcal{M}} \dfrac{64k^2\kappa^2\zeta}{(\ell\ell')^2} h^\varepsilon\bigl( i \psi, \eta_{ab} \gamma^a \beta^b \wedge \tilde{*}D\psi\bigl)
		= \mathrm{Re} S_D
	\end{align}
	with $D = (\partial_\mu + \dfrac{1}{4} A_{ab,\mu} \gamma^a \gamma^b) dx^\mu$ the covariant derivative relative to the Lorentz valued part of the connection already used in \eqref{Matter action 3} and where we set $\kappa^2\zeta = \tfrac{(\ell \ell')^2}{384k^2}$ such that the overall factor is arranged to stick with the Dirac action given in \cite[p.197]{gockeler_differential_1987},
	with $\mathrm{Re} S_D$ corresponding to the real part of the Dirac action.
	
	As has been described in \cite{Thibaut:2024uia} we can also consider the action:	
	\begin{align}
		\tS_M[A,\beta,\lambda] & =
		\int_{\mathcal{M}} h^\varepsilon\bigl( i \psi, h_{ab} \gamma^a \varpim^b \wedge *D_0\psi\bigl) \\
		& = \overbrace{S_M}^{\mathrm{Re} S_D} - 
		\dfrac{1}{2}\int_{\mathcal{M}} h^\varepsilon\bigl( i \psi, h_{ab} \gamma^a \varpim^b \wedge *(\lambda\psi)\bigl).
	\end{align}
	where $D_0 = d + \Phi_\varepsilon (\varpiz) = d + \Phi_\varepsilon (A - \lambda \mathds{1}_4)
	= (\partial_\mu + \dfrac{1}{4} A_{ab,\mu} \gamma^a \gamma^b - \dfrac{1}{2} \lambda_\mu) dx^\mu$ is the covariant derivative taking into account the full action of the $ \g_0 = \mathfrak{co}(3,2) $-valued part of the connection $\varpiz$.
	
	However what we deem to be the most natural matter actions modeled after the Dirac action are:
	\begin{align}
		S'_M[A,\beta,k,\ell,\ell'] = \int_{\mathcal{M}} h^\varepsilon\bigl( i \psi, h_{ab} \gamma^a \varpim^b \wedge *D\psi\bigl) 
		=\int_{\mathcal{M}} \dfrac{64k^2\kappa^2\zeta}{(\ell\ell')^2} h^\varepsilon\bigl( i \psi, \eta_{ab} \gamma^a \beta^b \wedge \tilde{*}D\psi\bigl)
		= \dfrac{\tk^2}{(\tl\tl')^2} S_M
	\end{align}
	and
	\begin{align}
		\tS'_M[A,\beta,\lambda,k,\ell,\ell'] & = \int_{\mathcal{M}} h^\varepsilon\bigl( i \psi, h_{ab} \gamma^a \varpim^b \wedge *D_0\psi\bigl) \nonumber \\
		& = S'_M - \dfrac{1}{2} \int_{\mathcal{M}} h^\varepsilon\bigl( i \psi, h_{ab} \gamma^a \varpim^b \wedge *(\lambda\psi)\bigl) =  \dfrac{\tk^2}{(\tl\tl')^2} \tS_M
	\end{align}
	where this time we set $\kappa^2\zeta = \tfrac{(\ell_0 \ell_0')^2}{384k_0^2}$ with $k_0=k(x_0)$ and $\tilde{k} = \dfrac{k}{k_0}$, $x_0 \in \mathcal{M}$ (the same goes for $\ell$ and $\ell'$). This way $S'_M$ and $\tS'_M$ respectively coincide with $S_M$ and $\tS_M$ at a reference point $x_0$ but not in general, thus taking into account the degrees of freedom hidden in $k$, $\ell$ and $\ell'$.
    
    As in section \ref{Moving Lorentzian geometry actions} one may also opt for other non-trivial matter actions expressed as:
    \begin{align}
        \label{f matter action Mobius}
        S^f_M[A,\varpim] &= f(k,k',\ell) \mathrm{Re}S_D \; \text{ or } \; \tilde{S}^f_M[A, \lambda,\varpim] = f(k,k',\ell) \tilde{S}_M
    \end{align}
    where $f(k,k',\ell)$ is a function of the mutation scalar fields.
    
	\subsubsection[Equations of motion]{Equations of motion}
	\label{Möbius/conformal Equations of motion}
	
	Let $ \int_\mathcal{M} \dfrac{v k}{2\pi^2\ell \ell'} \eta_{ab} T^a \wedge \beta^b \wedge \lambda $ be an interaction between torsion and dilations with $v$ a constant.
	We denote by $S'_G = S_G + \int_\mathcal{M} \dfrac{v k}{2\pi^2\ell \ell'} \eta_{ab} T^a \wedge \beta^b \wedge \lambda $ the sum of the deformed topological gauge action $S_G$ with this interaction.
	
	We now compute the equations of motions associated to the 4 total actions $S_T = S'_G + S_M$, $\tS_T = S'_G + \tS_M$, $S'_T = S'_G + S'_M$ and $\tS'_T = S'_G + \tS'_M$.
	
	The first 3 EOM's on $\beta$, $A$ and $\lambda$ are:
	\begin{align}
		\dfrac{\delta \mathcal{L}'_G }{\delta \beta^c} &= 
		\dfrac{ek}{8\pi^2\ell\ell'} ( R^{ab} + \dfrac{ k }{\ell\ell'} \beta^a \wedge \beta^b ) \wedge \beta^d \varepsilon_{abcd} \nonumber \\ 
		&+ \dfrac{k}{2\pi^2\ell\ell'} \biggl( y R^{ab} \wedge \beta_b \delta_{ac} - v \Bigl( \bigl(( \dfrac{dk}{k} - \dfrac{d\ell}{\ell} - \dfrac{d\ell'}{\ell'} ) \wedge \lambda + d\lambda \bigr) \wedge \beta_c - 2 \lambda \wedge T_c \Bigr) \biggr) \\
		&=  \left\{
		\begin{array}{ll}
			\tau_c \text{ for } S_M \\
			\dfrac{\tk^2}{(\tl\tl')^2} \tau_c \text{ for } S'_M \\
			\tau_c + \dfrac{1}{12} h^\varepsilon\bigl( i \psi, \gamma^a \eta_{ac} \tilde{*}(\lambda \psi) \bigl) \text{ for } \tS_M \\
			\dfrac{\tk^2}{(\tl\tl')^2} \Bigl(\tau_c + \dfrac{1}{12} h^\varepsilon\bigl( i \psi, \gamma^a \eta_{ac} \tilde{*}(\lambda \psi) \bigl)\Bigr) \text{ for } \tS'_M 
		\end{array}	
		\right. \nonumber \\[2mm]
		\dfrac{\delta \mathcal{L}'_G }{\delta A^{ab}}
		&=
		\dfrac{ek}{8\pi^2\ell\ell'} \bigl(T^c + \dfrac{1}{2} ( \dfrac{dk}{k} - \dfrac{d\ell}{\ell} - \dfrac{d\ell'}{\ell'} ) \wedge \beta^c \bigr) \wedge \beta^d \varepsilon_{abcd} \nonumber \\
		&+ \dfrac{k}{2\pi^2\ell\ell'} \Bigl( y T_a + \dfrac{y}{2} ( \dfrac{dk}{k} - \dfrac{d\ell}{\ell} - \dfrac{d\ell'}{\ell'} ) \wedge \beta_a - v \lambda \wedge \beta_a \Bigr)  \wedge \beta_b
		= \left\{
		\begin{array}{ll}
			\dfrac{1}{2} \mathfrak{s}_{ab} \text{ for } S_M \text{ or } \tS_M \\
			\dfrac{\tk^2}{2(\tl\tl')^2} \mathfrak{s}_{ab} \text{ for } S'_M \text{ or } \tS'_M 
		\end{array}	
		\right. \label{eom spin Möbius quasi} \\
		\dfrac{\delta \mathcal{L}'_G}{\delta \lambda} & = 
		- \dfrac{vk}{4\pi^2\ell\ell'} \Tmod[\mu \nu]{a} \beta_{a,\rho} \varepsilon^{\gamma \mu \nu \rho }
		= 
		\left\{
		\begin{array}{ll}
			0 \text{ for } S_M \text{ or } S'_M \\
			\dfrac{1}{12} h^\varepsilon\bigl( i \psi, \eta_{ab} \gamma^a \betamod[b]{\mu} \delta^{\mu\gamma} \psi \bigl) \text{ for } \tS_M \\
			\dfrac{\tk^2}{12(\tl\tl')^2} h^\varepsilon\bigl( i \psi, \eta_{ab} \gamma^a \betamod[b]{\mu} \delta^{\mu\gamma} \psi \bigl) \text{ for } \tS'_M
		\end{array}	
		\right.
		\numberthis \label{eom scalar quasi} 
	\end{align}
	with \eqref{eom spin Möbius quasi} linking the trace part of torsion $\Tmod[rc]{c}$, dilations $\lambda$,
	the parameters $k$, $\ell$, $\ell'$ and spin density $\mathfrak{s}$ according to the relation:
	\begin{align}
		\label{link torsion spin density Mobius}
		\overbrace{T^c_{~ab} \varepsilon^{ab}_{~~cr}}^{-6\tilde{t}_r} + \dfrac{y}{e} \bigl( 8 \Tmod[rc]{c} + 12 \partial_r (\dfrac{k}{\ell\ell'}) \dfrac{\ell \ell'}{k} \bigr) - \dfrac{24 v}{e} \lambda_r = \left\{
		\begin{array}{ll}
			\dfrac{4\pi^2\ell\ell'}{ek} \mathfrak{s}^{ab}_{~~abr} \text{ for } S_M \text{ or } \tS_M \\
			\dfrac{\tk}{\tl\tl'}\dfrac{4\pi^2\ell_0\ell_0'}{ek_0}  \mathfrak{s}^{ab}_{~~abr} \text{ for } S'_M \text{ or } \tS'_M
		\end{array}	
		\right.
	\end{align}
	
	We can observe that unless we choose a matter action involving dilational degrees of freedom (DOF) such as $\tS_M$ or $\tS'_M$ \eqref{eom scalar quasi} either implies $v=0$ or $T^a\wedge \beta_a=0= \tilde{t^a}$ (null axial torsion).
	
	And the EOM's relative to $\ell$, $\ell'$ and $k$ are respectively: 
	\begin{align}
		\dfrac{\delta \mathcal{L}'_G }{\delta \ell}  = - \dfrac{k}{\ell^2 \ell'} \biggl(&
		\dfrac{e}{16\pi^2 } ( R^{ab} + \dfrac{k}{\ell\ell'} \beta^a \wedge \beta^b ) \wedge \beta^c \wedge \beta^d \varepsilon_{abcd}
		+ \dfrac{1}{4 \pi^2}
		\bigl(
		y R^{ab} \wedge \beta_a \wedge \beta_b\nonumber \\
		%
		&-2v \lambda \wedge \beta^a \wedge T_a
		\bigr)
		\biggr)
		= \left\{
		\begin{array}{ll}
			0 \text{ for } S_M \text{ or } \tS_M \\
			\dfrac{2 \tk^2}{\ell_0 \tl^3 \tl'^2} S_M \text{ for } S'_M \\
			\dfrac{2 \tk^2}{\ell_0 \tl^3 \tl'^2} \tS_M \text{ for } \tS'_M
		\end{array}	
		\right. 
		\label{eom l Möbius}  \\
		\dfrac{\delta \mathcal{L}'_G }{\delta \ell'}  = - \dfrac{k}{\ell \ell'^2} \biggl(&
		\dfrac{e}{16\pi^2 } ( R^{ab} + \dfrac{k}{\ell\ell'} \beta^a \wedge \beta^b ) \wedge \beta^c \wedge \beta^d \varepsilon_{abcd}
		+ \dfrac{1}{4 \pi^2}
		\bigl(
		y R^{ab} \wedge \beta_a \wedge \beta_b\nonumber \\
		%
		&-2v \lambda \wedge \beta^a \wedge T_a
		\bigr)
		\biggr)
		= \left\{
		\begin{array}{ll}
			0 \text{ for } S_M \text{ or } \tS_M \\
			\dfrac{2 \tk^2}{\ell'_0 \tl^2 \tl'^3} S_M \text{ for } S'_M \\
			\dfrac{2 \tk^2}{\ell'_0 \tl^2 \tl'^3} \tS_M \text{ for } \tS'_M
		\end{array}	
		\right. 
		\label{eom l' Möbius} \\
		\dfrac{\delta \mathcal{L}'_G }{\delta k} 
		= \dfrac{1}{\ell\ell'} \biggl(&
		\dfrac{e}{16\pi^2 } ( R^{ab} + \dfrac{k}{\ell\ell'} \beta^a \wedge \beta^b ) \wedge \beta^c \wedge \beta^d \varepsilon_{abcd}
		+ \dfrac{1}{4 \pi^2}
		\bigl(
		y R^{ab} \wedge \beta_a \wedge \beta_b\nonumber \\
		%
		&-2v \lambda \wedge \beta^a \wedge T_a
		\bigr)
		\biggr)
		= \left\{
		\begin{array}{ll}
			0 \text{ for } S_M \text{ or } \tS_M \\
			\dfrac{2 \tk}{k_0 (\tl \tl')^2} S_M \text{ for } S'_M \\
			\dfrac{2 \tk}{k_0 (\tl \tl')^2} \tS_M \text{ for } \tS'_M
		\end{array}	
		\right.
		\label{eom k Möbius}
	\end{align}
	
	Simplifying the system of EOM's leads to:
	\begin{align}
		G_{kc} 
		- \dfrac{3k}{\ell\ell'} \eta_{ck}
		- \dfrac{1}{e} \biggl( y \Rmod[rs]{ab} \varepsilon^{rs}_{~~bk} \delta_{ac} + 2 v \Bigl( \lambda_b T_{c,rs} + 
		\bigl(d_r(\dfrac{k}{\ell\ell'}) \dfrac{\ell\ell'}{k}
		+ d_r\bigr) \lambda_b \beta_{c,s}   \Bigr) \varepsilon^{brs}_{~~~k} \biggr) \nonumber \\
		= \left\{
		\begin{array}{ll}
			- \dfrac{4\pi^2 \ell\ell'}{ek} \tau_{ck} \text{ for } S_M \\
			- \dfrac{4\pi^2 \ell_0\ell_0'}{ek_0} \dfrac{\tk}{\tl\tl'} \tau_{ck} \text{ for } S'_M \\
			- \dfrac{4\pi^2 \ell\ell'}{ek} \bigl(\tau_{ck}  + \dfrac{1}{2} \sqrt{|\tg| }h^\varepsilon\bigl( i \psi, \gamma^a \eta_{ac} \lambda_\mu \psi \betainvmod[\mu]{k} ) \bigr) \text{ for } \tS_M \\
			- \dfrac{4\pi^2 \ell_0\ell_0'}{ek_0} \dfrac{\tk}{\tl\tl'} \bigl(\tau_{ck}  + \dfrac{1}{2} \sqrt{|\tg| }h^\varepsilon\bigl( i \psi, \gamma^a \eta_{ac} \lambda_\mu \psi \betainvmod[\mu]{k} ) \bigr) \text{ for } \tS'_M
		\end{array}	
		\right. \label{eom frame Möbius '}  \\
		(T^c + d(\dfrac{k}{\ell\ell'}) \dfrac{\ell\ell'}{2k} \wedge \beta^c) \wedge \beta^d \varepsilon_{abcd}
		+ \dfrac{2}{e} \Bigl( 2y T_a + \bigl(yd(\dfrac{k}{\ell\ell'}) \dfrac{\ell\ell'}{k} -2v \lambda\bigr) \wedge \beta_a \Bigr)  \wedge \beta_b \nonumber \\
		= \left\{
		\begin{array}{ll}
			\dfrac{4\pi^2\ell\ell'}{ek} \mathfrak{s}_{ab} \text{ for } S_M \text{ and } \tS_M \\
			\dfrac{4\pi^2\ell_0\ell_0'}{ek_0} \dfrac{\tk}{\tl\tl'} \mathfrak{s}_{ab} \text{ for } S'_M \text{ or } \tS'_M
		\end{array}	
		\right. \label{eom spin Möbius '} \\
		- \dfrac{vk}{4\pi^2\ell\ell'} \Tmod[\mu \nu]{a} \beta_{a,\rho} \varepsilon^{\gamma \mu \nu \rho }
		= 
		\left\{
		\begin{array}{ll}
			0 \text{ for } S_M \text{ and } S'_M \\
			\dfrac{1}{2} h^\varepsilon\bigl( i \psi, \eta_{ab} \gamma^a \betamod[b]{\mu} \delta^{\mu\gamma} \psi \bigl) \text{ for } \tS_M \\
			\dfrac{\tk}{2\tl \tl'} h^\varepsilon\bigl( i \psi, \eta_{ab} \gamma^a \betamod[b]{\mu} \delta^{\mu\gamma} \psi \bigl) \text{ for } \tS'_M
		\end{array}	
		\right. \label{EOM lambda tilde}\\
		\dfrac{e}{8\pi^2 } (\rR + \dfrac{12k}{\ell\ell'} )
		+ \dfrac{1}{8 \pi^2} \bigl( y \Rmod[\mu\nu]{ab} \varepsilon^{\mu\nu}_{~~ab} - 2v \lambda_\mu \beta^a_\nu T_{a,\rho\lambda} \varepsilon^{\mu\nu\rho\lambda} \bigr)  \nonumber \\
		= \left\{
		\begin{array}{ll}
			0 \text{ for } S_M \text{ and } \tS_M \\
			\dfrac{3\ell_0\ell_0'}{k_0} \dfrac{\tk}{\tl\tl'} \sqrt{|\tg|} h^\varepsilon( i \psi, \eta_{ab} \gamma^a \beta^b_\mu D^\mu\psi) \text{ for } S'_M \\
			\dfrac{3\ell_0\ell_0'}{k_0} \dfrac{\tk}{\tl\tl'} \sqrt{|\tg|} h^\varepsilon( i \psi, \eta_{ab} \gamma^a \beta^b_\mu D_0^\mu\psi) \text{ for } \tS'_M
		\end{array}	
		\right. \label{quad eq Mobius}
	\end{align}

	And we obtain the following EOM's by identifying $ e = \dfrac{3\pi}{2\Lambda_0 G_0} $, $ \Lambda_0 = - (\dfrac{3k }{ \ell \ell'\ })|_{x_0} = \dfrac{\chi_0}{4} $ as well as $\gamma = \dfrac{e}{r+2y}$:
	\begin{align} 
		G_{kc} 
		+ \dfrac{\chi}{4} \eta_{ck}
		+ (\dfrac{\Lambda_0 G_0r}{3\pi} - \dfrac{1}{2\gamma}) \Rmod[rs]{ab} \varepsilon^{rs}_{~~bk} \delta_{ac} - \dfrac{4\Lambda_0G_0v}{3\pi} \Bigl( \lambda_b T_{c,rs} + 
		\bigl( \dfrac{d_r\chi}{\chi}
		+ d_r\bigr) \lambda_b \beta_{c,s}   \Bigr) \varepsilon^{brs}_{~~~k} \nonumber \\
		= \left\{
		\begin{array}{ll}
			\dfrac{8 \pi G_0}{\tilde{\chi}}   \tau_{ck} \text{ for } S_M \\
			8 \pi G_0 \tilde{\chi} \tau_{ck} \text{ for } S'_M \\
			\dfrac{8 \pi G_0}{\tilde{\chi}} \bigl(\tau_{ck}  + \dfrac{1}{2} \sqrt{|\tg| }h^\varepsilon\bigl( i \psi, \gamma^a \eta_{ac} \lambda_\mu \psi \betainvmod[\mu]{k} ) \bigr) \text{ for } \tS_M \\
			8 \pi G_0 \tilde{\chi} \bigl(\tau_{ck}  + \dfrac{1}{2} \sqrt{|\tg| }h^\varepsilon( i \psi, \gamma^a \eta_{ac} \lambda_\mu \psi \betainvmod[\mu]{k} ) \bigr) \text{ for } \tS'_M
		\end{array}	
		\right.
		\label{eom frame Möbius '2}  \\
		\bigl(T^c + \dfrac{1}{2} \dfrac{d\chi}{\chi} \wedge \beta^c \bigr) \wedge \beta^d \varepsilon_{abcd}
		+ 4  \Bigl( (\dfrac{1}{2\gamma} - \dfrac{\Lambda_0 G_0r}{3\pi}) \bigl(  T_a + \dfrac{1}{2}\dfrac{d\chi}{\chi} \wedge \beta_a \bigr) - \dfrac{2\Lambda_0G_0v}{3\pi} \lambda \wedge \beta_a \Bigr) \wedge \beta_b \nonumber \\
		= \left\{
		\begin{array}{ll}
			-\dfrac{8 \pi G_0}{\tilde{\chi}}  \mathfrak{s}_{ab} \text{ for } S_M \text{ or } \tS_M \\
			- 8 \pi G_0 \tilde{\chi} \mathfrak{s}_{ab} \text{ for } S'_M \text{ or } \tS'_M
		\end{array}	
		\right.  \label{eom spin Möbius '2} \\
		\dfrac{18v}{\pi^2} \chi \tilde{t}^a \betainvmod[\gamma]{a}= \dfrac{3v}{\pi^2} \chi \Tmod[\mu \nu]{a} \beta_{a,\rho} \varepsilon^{\gamma \mu \nu \rho } 
		= 
		\left\{
		\begin{array}{ll}
			0 \text{ for } S_M \text{ and } S'_M \\
			\dfrac{1}{2} h^\varepsilon\bigl( i \psi, \eta_{ab} \gamma^a \betamod[b]{\mu} \delta^{\mu\gamma} \psi \bigl) \text{ for } \tS_M \\
			\dfrac{\tk}{2\tl\tl'} h^\varepsilon\bigl( i \psi, \eta_{ab} \gamma^a \betamod[b]{\mu} \delta^{\mu\gamma} \psi \bigl) \text{ for } \tS'_M
		\end{array}	
		\right. \label{EOM lambda tilde 2} \\
		\chi = - \dfrac{12 k}
		{\ell \ell'} = 
		\left\{
		\begin{array}{ll}
			\chi_1 = \rR + \dfrac{y}{e} \Rmod[\mu\nu]{ab} \varepsilon^{\mu\nu}_{~~ab} -\dfrac{2v}{e} \lambda_\mu \beta^a_\nu T_{a,\rho\lambda} \varepsilon^{\mu\nu\rho\lambda} \text{ for } S_M \text{ or } \tS_M \\
			\chi'_2 = \chi_1 / \bigl( 1 - \dfrac{288 \pi^2}{e\chi_0'^2} \sqrt{|\tg|} h^\varepsilon( i \psi, \eta_{ab} \gamma^a \beta^b_\mu D^\mu\psi)\bigr) \text{ for } S'_M \\
			\chi'_3 = \chi_1 / \bigl( 1 - \dfrac{288 \pi^2}{e\chi_0'^2} \sqrt{|\tg|} h^\varepsilon( i \psi, \eta_{ab} \gamma^a \beta^b_\mu D^\mu_0\psi)\bigr) \text{ for } \tS'_M  \label{eom csm cst 1} 
		\end{array}	
		\right.
	\end{align}

	Now, the choice of taking $\tS_M$ (or $\tS'_M$) over $S_M$ (or $S'_M$), mainly changes relations \eqref{eom frame Möbius '2} and \eqref{EOM lambda tilde 2}, respectively adding $\dfrac{1}{2} \sqrt{|\tg| }h^\varepsilon( i \psi, \gamma^a \eta_{ac} \lambda_\mu \psi \betainvmod[\mu]{k} )$ as a source for curvature and linking torsion to matter fields via the axial vector $\tilde{t}^a = \dfrac{1}{6} T^b_{~\mu \nu} \varepsilon^{a~\mu \nu}_{~b}$. 
	Going from $S'_M$ to $\tS'_M$ also changes $D$ to $D_0$ in \eqref{eom csm cst 1}.
	
	One may also observe that similarly to what was done in \cite{Thibaut:2024uia}, using \eqref{link torsion spin density Mobius} in \eqref{eom frame Möbius '2} replaces the dilations in \eqref{eom frame Möbius '} in favor of a new secondary source of curvature expressed in terms of the spin density of matter, torsion, their variations and the ratio $\dfrac{d\Lambda_G}{\Lambda_G} = \dfrac{d\chi}{\chi}$. The expression of this secondary source of curvature being:
	\begin{align}
	 	\dfrac{\Lambda_0v}{6\pi^2} \bigl( \lambda_b T_{c,rs} + 2( \dfrac{\partial_r\Lambda_G}{\Lambda_G} + \partial_r ) \lambda_b \beta_{c,s} \bigr)\varepsilon^{brs}_{~~~k}
	\end{align}
	with
	\begin{align}
	    \lambda_r = 
	    \left\{
		\begin{array}{ll}
			\dfrac{1}{v} \bigl( \dfrac{\pi}{16 \Lambda_0 G_0} T^c_{~ab} \varepsilon^{ab}_{~~cr} +  \dfrac{y}{3} ( T^c_{~rc} + \dfrac{3}{2} \dfrac{\partial_r\Lambda_G}{\Lambda_G} ) + \dfrac{\pi^2}{2\Lambda_0\tilde{\chi}} \mathfrak{s}^{ab}_{~~abr} \bigr) \text{ for } S_M \text{ or } \tS_M \\
			\dfrac{1}{v} \bigl( \dfrac{\pi}{16 \Lambda_0 G_0} T^c_{~ab} \varepsilon^{ab}_{~~cr} + \dfrac{y}{3} ( T^c_{~rc} + \dfrac{3}{2} \dfrac{\partial_r\Lambda_G}{\Lambda_G} ) + \dfrac{\pi^2\tilde{\chi}}{2\Lambda_0}  \mathfrak{s}^{ab}_{~~abr} \bigr) \text{ for } S'_M \text{ or } \tS'_M 
		\end{array}	
		\right.
	\end{align}
	In the following table we compare the behavior of $\Lambda_G$, $\Lambda_{\text{vac}}$ and $G$ depending on the choice of matter action:
	\begin{align} \label{summary quasi Mobius} 
		\left\{
		\begin{array}{ll}
			S_T \text{ or } \tS_M &\Leftrightarrow \Lambda_G = \dfrac{\chi}{4} \text{ and }
			\left\{
			\begin{array}{ll}
				\Lambda_{\text{vac}} \\
				G
			\end{array}
			\right\} \propto \dfrac{1}{\chi} \\[2mm]
			S'_T \text{ or } \tS'_M  &\Leftrightarrow \Lambda_G = \dfrac{\chi}{4}  \text{ and }
			\left\{
			\begin{array}{ll}
				\Lambda_{vac} = 8 \pi G \rho_{vac}  \\
				G=G_0 \tilde{\chi}
			\end{array}
			\right\} \propto \chi
		\end{array}
		\right.
	\end{align}
	
	Just as in section \ref{Moving Lorentzian Cartan geometry} we derive that choosing a total action $S = S_G + S^f_M$ or $S = S_G + \tilde{S}^f_M$ (see \eqref{f matter action Mobius}) with $f(k,k',\ell) = \dfrac{\tk \tk'}{\tl}$ allows to retrieve a dynamical dark energy $\Lambda_G$ with the usual Newton constant $G_0$ and vacuum contribution $\Lambda_{vac} = 8 \pi G_0 \rho_{vac}$.
	
		\section{Conclusion}
		
		We have first introduced the notion of moving geometries described by quotients of Lie groups and algebras with spacetime dependant structure "fields" in place of structure constants and built a deformed topological action functional for moving Lorentzian and Lorentz$\times$Weyl geometries. 
		
 		We then showed that considering moving geometries naturally leads via the action principle to a theory with cosmological and Newton fields in place of constants. These fields may in fact be interpreted as sources of dark energy and dark matter with differing variations and expressions in terms of scalars of the spacetime curvature, torsion and matter fields depending on the choice of matter action. In particular cases we even retrieve existing dark energy models such as the Ricci dark energy model of \cite{gao_holographic_2009} with the added twist of a varying gravitational coupling, giving a clear interpretation of this model (with parameter $\alpha = 1/2$) in terms of moving geometries. Hence leading to an entirely novel way of building theories with dynamical dark energy and gravitational coupling. The main advantage is that the scalar fields are entirely determined by the action principle at each point.
		
		In simplified models ($r=y=0$) the usual choice of matter action (Dirac action) leads to a source of dark energy $\Lambda_G \propto \rR$ proportional to the Ricci scalar that decreases over cosmological time in favor of a growing gravitational coupling $G\propto \dfrac{1}{\rR}$. This type of behavior could in fact alleviate cosmological tensions as noted in \cite{tang_uniting_2025,collaboration_desi_2025} and potentially provide some kind of Lenz law by decoupling matter from curvature in the high curvature limit. The inverse dependence of $G$ on $\rR$ should also lead to higher values of $G$ in low-curvature regions, thus for example leading to a higher $G$ in the outermost region of galaxies when compared to the innermost region. This is in fact reminiscent of other modified theories of gravity such as MOND \cite{famaey2012modified}, MOG/Scalar-tensor-vector gravity \cite{moffat2006scalar} 
		and should be investigated in more details in the future to see if it could help describe galaxy rotation curves.
		
		Lastly the requirement of having $\Lambda_G \rightarrow 0$ 
		in order to have an asymptotically topological theory already detailed in \cite{Thibaut:2024uia} thus motivates a perturbative expansion in terms of $\Lambda_G$.
		
		The next logical steps should consist of a more detailed study of these models, analysing the impact of the torsion and matter fields in the general case ($r\neq 0 \neq y$ and with matter action $S^f_M$) as well as evaluating the theory from a phenomenological point of view in order to determine for example which type of matter action better fits with the current data.

        	\subsubsection*{Acknowledgements}
	
	           I would like to thank my supervisors Serge Lazzarini and Thierry Masson for helpful discussions.

		\addcontentsline{toc}{section}{References}

\begin{thebibliography}{10}
			
			\bibitem{alexander_chern-simons_2009}
			Stephon Alexander and Nicolas Yunes.
			\newblock Chern-{Simons} {Modified} {General} {Relativity}.
			\newblock {\em Physics Reports}, 480(1-2):1--55, August 2009.
			\newblock arXiv:0907.2562 [astro-ph, physics:gr-qc, physics:hep-th].
			
			\bibitem{taveras_barbero-immirzi_2008}
			Victor Taveras and Nicolas Yunes.
			\newblock The {Barbero}-{Immirzi} {Parameter} as a {Scalar} {Field}:
			{K}-{Inflation} from {Loop} {Quantum} {Gravity}?
			\newblock {\em Physical Review D}, 78(6):064070, September 2008.
			\newblock arXiv:0807.2652 [astro-ph, physics:gr-qc, physics:hep-th].
			
			\bibitem{hamber2011scale}
			Herbert~W Hamber and Reiko Toriumi.
			\newblock Scale-dependent {Newton’s} constant {G} in the conformal
			{Newtonian} gauge.
			\newblock {\em Physical Review D—Particles, Fields, Gravitation, and
				Cosmology}, 84(10):103507, 2011.
			
			\bibitem{alexander_quantum_2019}
			Stephon Alexander, Joao Magueijo, and Lee Smolin.
			\newblock The {Quantum} {Cosmological} {Constant}.
			\newblock {\em Symmetry}, 11(9):1130, September 2019.
			\newblock Number: 9 Publisher: Multidisciplinary Digital Publishing Institute.
			
			\bibitem{alexander_zero-parameter_2019}
			Stephon Alexander, Marina Cortês, Andrew~R. Liddle, João Magueijo, Robert
			Sims, and Lee Smolin.
			\newblock Zero-parameter extension of general relativity with a varying
			cosmological constant.
			\newblock {\em Physical Review D}, 100(8):083506, October 2019.
			\newblock Publisher: American Physical Society.
			
			\bibitem{copeland_dynamics_2006}
			Edmund~J. Copeland, M.~Sami, and Shinji Tsujikawa.
			\newblock Dynamics of dark energy.
			\newblock {\em International Journal of Modern Physics D}, 15(11):1753--1935,
			November 2006.
			\newblock arXiv:hep-th/0603057.
			
			\bibitem{li2013dark}
			Miao Li, Xiao-Dong Li, Shuang Wang, and Yi~Wang.
			\newblock Dark energy: A brief review.
			\newblock {\em Frontiers of Physics}, 8:828--846, 2013.
			
			\bibitem{sengupta2025cosmological}
			Sandipan Sengupta.
			\newblock Cosmological consequences of varying couplings in gravity action.
			\newblock {\em arXiv preprint arXiv:2502.18585}, 2025.
			
			\bibitem{uzan2011varying}
			Jean-Philippe Uzan.
			\newblock Varying constants, gravitation and cosmology.
			\newblock {\em Living reviews in relativity}, 14(1):1--155, 2011.
			
			\bibitem{tang_uniting_2025}
			Xianzhe~TZ Tang, Dillon Brout, Tanvi Karwal, Chihway Chang, Vivian Miranda, and
			Maria Vincenzi.
			\newblock Uniting the {Observed} {Dynamical} {Dark} {Energy} {Preference} with
			the {Discrepancies} in {$\Omega_m$} and {$H_0$} {Across} {Cosmological}
			{Probes}, March 2025.
			\newblock arXiv:2412.04430 [astro-ph].
			
			\bibitem{collaboration_desi_2025}
			DESI Collaboration.
			\newblock {DESI} {DR2} {Results} {II}: {Measurements} of {Baryon} {Acoustic}
			{Oscillations} and {Cosmological} {Constraints}, March 2025.
			\newblock arXiv:2503.14738 [astro-ph].
			
			\bibitem{Thibaut:2024uia}
			Jean Thibaut and Serge Lazzarini.
			\newblock {Gravity as a deformed topological gauge theory}.
			\newblock March 2024.
			\newblock arXiv:2403.05284 [gr-qc].
			
			\bibitem{gao_holographic_2009}
			Changjun Gao, Fengquan Wu, Xuelei Chen, and You-Gen Shen.
			\newblock A {Holographic} {Dark} {Energy} {Model} from {Ricci} {Scalar}
			{Curvature}.
			\newblock {\em Physical Review D}, 79(4):043511, February 2009.
			\newblock arXiv:0712.1394 [astro-ph, physics:gr-qc, physics:hep-th].
			
			\bibitem{sharpe_differential_1997}
			R.W. Sharpe.
			\newblock {\em Differential {Geometry}: {Cartan}'s {Generalization} of
				{Klein}'s {Erlangen} {Program}}.
			\newblock Graduate {Texts} in {Mathematics}. Springer New York, 1997.
			
			\bibitem{chirco_gravity_2025}
			Goffredo Chirco, Alfonso Lamberti, Lucio Vacchiano, and Patrizia Vitale.
			\newblock Gravity from ({A}) {dS} {Yang}-{Mills}: a natural reduction.
			\newblock {\em arXiv preprint arXiv:2501.17574}, 2025.
			
			\bibitem{wise_macdowell-mansouri_2010}
			Derek~K. Wise.
			\newblock {MacDowell}-{Mansouri} gravity and {Cartan} geometry.
			\newblock {\em Classical and Quantum Gravity}, 27(15):155010, August 2010.
			\newblock arXiv:gr-qc/0611154.
			
			\bibitem{gockeler_differential_1987}
			M.~Göckeler and T.~Schücker.
			\newblock {\em Differential {Geometry}, {Gauge} {Theories}, and {Gravity}}.
			\newblock Cambridge {Monographs} on {Mathematical} {Physics}. Cambridge
			University Press, 1987.
			
			\bibitem{sola_vacuum_2014}
			Joan Solà.
			\newblock Vacuum energy and cosmological evolution.
			\newblock {\em AIP Conference Proceedings}, 1606(1):19--37, July 2014.
			
			\bibitem{carroll_cosmological_2001}
			Sean~M. Carroll.
			\newblock The {Cosmological} {Constant}.
			\newblock {\em Living Reviews in Relativity}, 4(1):1, 2001.
			
			\bibitem{bonanno2024dust}
			Alfio Bonanno, Daniele Malafarina, and Antonio Panassiti.
			\newblock Dust collapse in asymptotic safety: A path to regular black holes.
			\newblock {\em Physical Review Letters}, 132(3):031401, 2024.
			
			\bibitem{eichhorn2022black}
			Astrid Eichhorn and Aaron Held.
			\newblock Black holes in asymptotically safe gravity and beyond.
			\newblock {\em arXiv preprint arXiv:2212.09495}, 2022.
			
			\bibitem{dupuis2021nonperturbative}
			Nicolas Dupuis, L{é}onie Canet, Astrid Eichhorn, Walter Metzner, Jan~M
			Pawlowski, Matthieu Tissier, and Nicol{\'a}s Wschebor.
			\newblock The nonperturbative functional renormalization group and its
			applications.
			\newblock {\em Physics Reports}, 910:1--114, 2021.
			
			\bibitem{famaey2012modified}
			Benoit Famaey and Stacy~S McGaugh.
			\newblock Modified newtonian dynamics (mond): observational phenomenology and
			relativistic extensions.
			\newblock {\em Living reviews in relativity}, 15:1--159, 2012.
			
			\bibitem{moffat2006scalar}
			John~W Moffat.
			\newblock Scalar--tensor--vector gravity theory.
			\newblock {\em Journal of Cosmology and Astroparticle Physics}, 2006(03):004,
			2006.
			
			\bibitem{chae2022distinguishing}
			Kyu-Hyun Chae.
			\newblock Distinguishing dark matter, modified gravity, and modified inertia
			with the inner and outer parts of galactic rotation curves.
			\newblock {\em The Astrophysical Journal}, 941(1):55, 2022.
			
			\bibitem{parbin2023galactic}
			Nashiba Parbin and Umananda~Dev Goswami.
			\newblock Galactic rotation dynamics in a new f ({R}) gravity model.
			\newblock {\em The European Physical Journal C}, 83(5):411, 2023.
			
			\bibitem{shabani2023galaxy}
			Hamid Shabani and PHRS Moraes.
			\newblock Galaxy rotation curves in the f ({R, T}) gravity formalism.
			\newblock {\em Physica Scripta}, 98(6):065302, 2023.
			
			\bibitem{mohan2024galactic}
			Gayatri Mohan and Umananda~Dev Goswami.
			\newblock Galactic rotation curves of spiral galaxies and dark matter in f ({R,
				T}) gravity theory.
			\newblock {\em International Journal of Geometric Methods in Modern Physics},
			21(04):2450082, 2024.
			
			\bibitem{kobayashi_transformation_1995}
			Shoshichi Kobayashi.
			\newblock {\em Transformation {Groups} in {Differential} {Geometry}}.
			\newblock Springer Science \& Business Media, February 1995.
			
			\bibitem{kobayashi_foundations_1996}
			Shoshichi Kobayashi and Katsumi Nomizu.
			\newblock {\em Foundations of {Differential} {Geometry}, {Volume} 2}.
			\newblock Wiley, February 1996.
			
		\end{thebibliography}

	\end{document}